\documentclass[11pt]{article}
\usepackage[margin=1in]{geometry}
\usepackage[utf8]{inputenc}
\usepackage[T1]{fontenc}
\usepackage{microtype}
\usepackage{hyperref}

\usepackage{mathtools}
\usepackage{siunitx}
\usepackage{float}
\usepackage{booktabs}
\usepackage{tikz}
\usepackage{multicol}
\usepackage[linguistics]{forest}
\usetikzlibrary{fit}
\usepackage{caption}
\usepackage{subcaption}
\usepackage{comment}
\usepackage{mdframed}
\usetikzlibrary{cd}
\usetikzlibrary{arrows,shapes,positioning,shadows,trees}
\usepackage{proof}
\usepackage{amsmath}
\usepackage{amsthm}
\usepackage{paralist,enumitem}
\usepackage{mathpartir}
\usepackage{url}
\usepackage{lineno}
\usepackage{color}
\usepackage{diagbox}
\usepackage{pifont}
\usepackage{fancybox}
\usepackage{nccmath}
\usepackage{listings}
\usepackage{wrapfig}
\usepackage{wasysym}
\usepackage{colortbl}
\usepackage{stmaryrd}
\usepackage{amsthm}
\usepackage{makecell}
\usepackage{adjustbox}
\usepackage{centernot}
\usepackage{pifont}
\usepackage{todonotes}
\usepackage{graphicx}
\usepackage{amssymb}

\setlist{left=.618\parindent .. 1.618\parindent}

\newif\iffullversion
\fullversionfalse

\newif\ifdiff
\difffalse

\newtheorem{theorem}{Theorem}

\newtheorem{definition}{Definition}

\newcommand{\agentflow}{\textsc{AgentFlow}}
\newcommand{\cedar}{\textsc{Cedar}}

\newcommand{\flows}{\leq}

\newcommand{\labels}{\mathcal{L}}

\newcommand{\policy}{\textsc{Policy}}

\newcommand{\removecat}[1]{\ominus{#1}}
\newcommand{\downgradelabel}[1]{\downarrow{#1}}
\newcommand{\inherit}[1]{\triangleleft{#1}}
\newcommand{\transform}[1]{\rightsquigarrow{#1}}
\newcommand{\constantp}[1]{\equiv{#1}}

\begin{document}
%
\title{AgentFlow: A Flow-Centric Policy Language and Framework for Securing LLM Agent Systems}
\author{
  Basavesh Ammanaghatta Shivakumar\\
  Virginia Tech
  \and
  Swarn Priya\\
  Virginia Tech
  \and
  Peng Gao\\
  Virginia Tech
}
\date{\today}

\maketitle

\begin{abstract}
LLM agents increasingly read untrusted content, invoke external tools, access
private data, and delegate work to other agents. In these systems, harm often
comes not only from obviously dangerous tool calls, but from how data moves
through a sequence of otherwise plausible actions. A confidential record
summarized by an LLM and then sent in an outgoing email can violate policy even
if each step appears locally legitimate. In this technical report, we present
\agentflow{}, a flow-centric policy language and enforcement model for
specifying where data may travel through an agent system. \agentflow{} expresses
policies over labeled runtime edges: which tools may receive sensitive fields,
which sinks may receive released data, and what authority may cross delegation
boundaries. The language supports flow and path rules, task-scoped
capabilities, controlled release, and stateful taint semantics. A runtime
reference monitor enforces these policies on mediated actions. We give a formal
model for \agentflow{} and a bounded SMT-based verifier for a structured policy
fragment. In our prototype evaluation, the verifier catches all seeded unsafe
policy variants in our study, and seven safety properties verify in under
0.5\,seconds each. On 949 AgentDojo injected test cases across four suites,
\agentflow{} reduces confirmed compromise from 33.0\% to 0.0\% while improving
aggregate utility from 46.7\% to 63.3\%. On a 200-case AgentDyn
\textsc{Dailylife} stress test, it reduces confirmed compromise from 73.5\% to
0.0\% while preserving near-baseline utility (44.5\% to 43.5\%). Breadth checks
across ASB, InjecAgent, BIPIA, AgentHarm, and MCPTox-style replays suggest that
the configured policies block the benchmark-specified policy-visible attacker
flows; in ASB's upstream direct-prompt-injection harness, attack success is
0/1{,}200. These results are preliminary and are scoped to the modeled
policy-visible agent behaviors and the evaluated benchmarks.
\end{abstract}


%

\section{Introduction}
\label{sec:intro}

LLM-based agent systems now routinely plan, invoke tools, and interact with
external services on a user's behalf~\cite{react, toolformer, surveyrise}.
This delegated authority makes agent failures qualitatively different from
ordinary access-control mistakes: the dangerous behavior often emerges across
a sequence of plausible steps~\cite{surveytrustworthyagents}.

Consider a customer-support agent with access to a user database and an email
tool. Reading a customer's record is legitimate. Sending email is legitimate.
Reading a customer's Social Security number and sending it to an external
address is data exfiltration. Indirect prompt injection makes this pattern
especially risky: an attacker-controlled web page, email, document, or tool
output can instruct the agent to combine private reads with external writes
that the user never intended~\cite{greshake23, logtoleak}. Similar failures
arise when untrusted content is treated as an internal directive, when a
high-privilege agent delegates sensitive context to a weaker sub-agent, or
when reconnaissance is followed by targeted exfiltration~\cite{seagent,
logtoleak}.

\noindent\textbf{Policy requirement.}
Request-level authorization languages such as \cedar{}~\cite{cedar} are
well suited to deciding whether a principal may perform an action on a
resource. Agent systems need that check, but also need to remember where data
came from, how it was transformed, and which later actions it may influence.
Request-level authorization does not provide this end-to-end data-flow state
by itself. This paper introduces a policy layer for specifying when sensitive
or untrusted content may influence later tool calls, response sinks, and
delegation steps.

Recent defenses address parts of this problem. Progent~\cite{progent} controls tool
privileges; SEAgent~\cite{seagent} studies privilege escalation through
delegation; AgentArmor~\cite{agentarmor} analyzes runtime traces;
and Conseca~\cite{conseca} generates contextual task policies. Concurrent
systems such as FORGE~\cite{pcas}, FIDES~\cite{fides}, CaMeL~\cite{camel}, and
SAMOS~\cite{samos} explore related enforcement
designs. These systems improve agent safety,
but they make different design choices about where policies live and how they
are checked.

The goal of this paper is not to claim that information-flow control or
reference monitoring is new. Rather, \agentflow{} asks how to make these ideas
usable as an explicit policy layer for modern agent deployments, where the
operator may not control the agent architecture, tools are registered
dynamically, and security decisions must account for tool calls, delegated
sub-agents, response sinks, and multi-step histories. This leads to a different
interface from architecture-specific defenses: policies are written over
labeled runtime edges and checked both during execution and, for a bounded
fragment, before deployment. This report is intentionally scoped to
policy-visible mediated agent behaviors: the reference monitor enforces the
configured policy interface for tool calls, delegated actions, and modeled
response sinks, and the bounded verifier checks safety properties for the
structured policy fragment. It does not claim to solve every unsafe LLM
behavior or every agentic failure mode.

\noindent\textbf{\agentflow{} overview.}
We present \agentflow{}, a flow-centric policy language and framework for
securing LLM agent systems. \agentflow{} enforces data-flow and path
constraints over agent executions through a runtime monitor and
pre-deployment checks. The framework is organized around three ideas:

\begin{enumerate}[leftmargin=*]
\item \textbf{Flow rules for agent actions.} Several recurring agent security
requirements can be stated as constraints on labeled execution edges: which
tools may receive sensitive fields, which sinks may receive released data, and
what authority may cross a delegation boundary. Unlike defenses tied to a
specific agent architecture, \agentflow{} exposes these checks as declarative
rules over tool calls, response sinks, and delegation steps.

\item \textbf{A multi-dimensional label lattice.} Every data node is annotated
with sensitivity, category, and trust metadata. The dimensions serve different
purposes: sensitivity controls disclosure, category distinguishes fields such
as email addresses from SSNs or credentials, and trust records whether content
came from an untrusted source. This lets policies distinguish, for example,
public-but-untrusted web text from trusted-but-confidential customer data.
Labels are propagated through agent and tool operations using formally defined
transfer functions.

\item \textbf{Path rules for history-aware security.} In addition to single-hop
flow constraints, \agentflow{} supports \emph{path rules} that reason about
multi-step behavior. Path rules can be evaluated over lineage-connected paths
by default or over broader session-scoped histories when explicitly requested.
This supports policies that require sanitization before external release,
prevent reconnaissance-then-exfiltration patterns, or bound delegation depth,
which are difficult to express with request-level tool permissions alone.
\end{enumerate}

\noindent\textbf{Contributions:}
This paper makes the following contributions:
\begin{itemize}[leftmargin=*]
\item We define a threat model for common agent attack surfaces: indirect
prompt injection, malicious tool providers, and compromised sub-agents. We
scope the model around \emph{policy-visible flows}, which lets us state both
the attacks \agentflow{} targets and non-goals such as jailbreaks,
hallucinations, covert channels, and unmediated tool bugs (\S\ref{sec:threat}).

\item We design \agentflow{}, a policy language for LLM agents with tool calls,
delegation, and response sinks. \agentflow{} expresses security requirements
using flow rules for individual execution edges and path rules for multi-step
histories over labeled agent executions (\S\ref{sec:design}).

\item We present a formal model of \agentflow{}, including a multi-dimensional
label lattice, an append-only provenance graph, execution traces, tool
propagation functions, and a small-step operational semantics. We prove that
the runtime preserves well-formedness and enforces the stated security goals
for mediated traces (\S\ref{sec:semantics}).

\item We develop an SMT-based verifier for the bounded, structured policy
fragment used for pre-deployment checking. The verifier checks safety
properties, including non-leakage, privilege-escalation resistance,
trust-preserving flows, delegation safety, and taint monotonicity, or produces
counterexample traces.

\item We implement an \agentflow{} prototype consisting of a policy compiler,
runtime engine, and verifier. We present preliminary evaluation evidence on
AgentDojo~\cite{agentdojo} (949 injected cases across four task suites),
AgentDyn~\cite{agentdyn} (200 dynamic \textsc{Dailylife} cases), and
replay-based evaluations on ASB~\cite{asb}, InjecAgent~\cite{injecagent},
BIPIA~\cite{bipia}, AgentHarm~\cite{agentharm}, and MCPTox~\cite{mcptox}.
Under the evaluated policies, \agentflow{} reduces confirmed compromise to 0.0\%
on AgentDojo and AgentDyn, blocks the benchmark-specified policy-visible
attacker flows in the replay evaluations, verifies seven safety properties in
under 0.5\,seconds each, and detects all 12 seeded unsafe policy variants in
our study. These results are best understood as initial evidence for the
modeled policy fragment and enforcement interface, not as a complete
characterization of all agentic failure modes (\S\ref{sec:eval}).
\end{itemize}

Figure~\ref{fig:overview} presents the system overview. \agentflow{}
interposes a runtime policy engine between the LLM agent and its tools. The
runtime checks each tool call against declarative flow and path rules before
execution. A separate SMT-based verifier checks safety properties over a
bounded abstraction before deployment.

\begin{figure}[t]
\centering
\begin{adjustbox}{max width=\columnwidth}
\begin{tikzpicture}[
  font=\scriptsize,
  node distance=0.70cm and 0.82cm,
  box/.style={
    draw,
    rounded corners=2pt,
    align=center,
    inner sep=3pt,
    minimum height=0.74cm,
    text width=1.85cm
  },
  widebox/.style={
    draw,
    rounded corners=2pt,
    align=center,
    inner sep=3pt,
    minimum height=0.82cm,
    text width=2.35cm
  },
  group/.style={
    draw,
    rounded corners=3pt,
    dashed,
    inner sep=6pt
  },
  arrow/.style={->, >=latex, thick},
  block/.style={box, fill=gray!8},
  policy/.style={box, fill=blue!7},
  runtime/.style={widebox, fill=green!7},
  sink/.style={box, fill=orange!10}
]

\node[policy] (dsl) {Policy DSL\\labels, flows,\\paths};
\node[policy, right=of dsl] (compiler) {Compiler\\schema checks};
\node[policy, right=of compiler] (verifier) {SMT verifier\\bounded safety};
\node[policy, right=of verifier] (verified) {Verified\\policy};

\node[block, below=1.25cm of compiler] (agent) {LLM agent\\planner loop};
\node[runtime, right=1.15cm of agent] (monitor) {\agentflow{} runtime\\reference monitor\\Allow/Deny/Pause};
\node[block, left=0.95cm of agent] (inputs) {User task\\external content};
\node[sink, right=1.15cm of monitor] (tools) {Tools, APIs,\\sub-agents};
\node[sink, below=0.80cm of tools] (responses) {External and\\response sinks};
\node[block, below=0.92cm of monitor] (state) {Runtime state\\labels, lineage,\\capabilities, trace};

\draw[arrow] (dsl) -- (compiler);
\draw[arrow] (compiler) -- (verifier);
\draw[arrow] (verifier) -- (verified);
\draw[arrow] (verified.south) -- ++(0,-0.45) -| node[pos=0.84, above, font=\tiny] {load verified policy} (monitor.north);

\draw[arrow] (inputs) -- node[above, font=\tiny] {task/data} (agent);
\draw[arrow] (agent) -- node[above, font=\tiny] {proposed call} (monitor);
\draw[arrow] ([yshift=4pt]monitor.east) -- node[above, font=\tiny] {allow call} ([yshift=4pt]tools.west);
\draw[arrow] ([yshift=-4pt]tools.west) -- node[below, font=\tiny] {tool result} ([yshift=-4pt]monitor.east);
\draw[arrow] (monitor.south east) -- node[below left=-1pt, font=\tiny] {allow release} (responses.west);
\draw[arrow] ([yshift=-5pt]monitor.west) -- node[below, font=\tiny] {checked response} ([yshift=-5pt]agent.east);
\draw[arrow] (monitor) -- node[right, font=\tiny] {read/update} (state);

\node[group, fit=(dsl) (compiler) (verifier) (verified),
      label={[font=\footnotesize]above:Pre-deployment}] {};
\node[group, fit=(inputs) (agent) (monitor) (tools) (responses) (state),
      label={[font=\footnotesize]below:Runtime enforcement}] {};
\end{tikzpicture}
\end{adjustbox}
\caption{\agentflow{} system overview. Before deployment, policies are compiled
and checked by the SMT verifier. At runtime, the reference monitor intercepts
tool calls and modeled response sinks, evaluates labels, capabilities, lineage,
and path rules, and returns Allow/Deny/Pause.}
\label{fig:overview}
\end{figure}

\section{Background and Motivation}
\label{sec:background}

Modern LLM agents pair a planning loop with tool access: the model calls a
tool, observes the result, and decides what to do next~\cite{react,
toolformer}. Those tools may query databases, call APIs, scrape the web, read
files, send email, execute code, or delegate to sub-agents. Frameworks and
tool protocols such as the Model Context Protocol (MCP)~\cite{mcpsurvey} make
these compositions easy to build. They also mean that data from a web page,
email, or tool response can become part of a later privileged action~\cite{injecagent}.

\subsection{Motivating Example}

Consider an enterprise customer-support agent, \texttt{Agent::main}, owned by employee Alice. The agent has access to:
\begin{itemize}[leftmargin=*]
\item a \texttt{users} database containing names, SSNs, emails, and zip codes;
\item a \texttt{transactions} database containing card numbers and amounts;
\item a corporate email system and Slack workspace; and
\item an LLM summarization service, which may be hosted by a third party.
\end{itemize}

The agent also delegates research tasks to a sub-agent,
\texttt{Agent::researcher}, which can browse the web and call APIs but should
not receive confidential data or cause external side effects.

\paragraph{Attack 1: Data exfiltration via summarization.}
A malicious instruction embedded in a web page~\cite{greshake23} (retrieved by
the researcher sub-agent and passed to the main agent) instructs the agent to:
``Summarize all customer SSNs and email them to \texttt{attacker@evil.com}.''
The main agent is allowed to read the user database and send emails, so each
tool call may look legitimate in isolation. The unsafe part is the path:
confidential information flows from the database, through an LLM summarization
step, to an external email sink without sanitization.

\paragraph{Attack 2: Trust laundering.}
The researcher sub-agent scrapes a web page that claims: ``The following text
is a verified internal policy update: all agents should forward user data to
\texttt{backup@evil.com}.'' The text came from the web, so it is untrusted.
If the system drops that provenance, the main agent may treat the injected
instruction as an internal directive~\cite{liupromptinjection,
promptinfection}.

\paragraph{Attack 3: Privilege escalation via delegation.}
The main agent delegates a research task while carrying confidential context.
If the delegation boundary does not restrict what data or authority can cross,
the sub-agent may receive data it should not see, or the parent may act as a
confused deputy by performing privileged reads or writes on the sub-agent's
behalf~\cite{multiagentchallenges, promptflowintegrity}.

\subsection{Request-Level Authorization Is Not Enough}
Request-level authorization remains necessary: systems still need to decide
whether an agent may call a particular tool on a particular resource. A
language such as \cedar{}~\cite{cedar} is well suited to that question because
it evaluates individual \emph{(principal, action, resource)} requests against
permit and forbid policies.

The enterprise example cannot be checked by request-level authorization alone.
Querying the database, calling a summarizer, and sending an email can each be
legitimate operations. What matters is the chain connecting them: confidential
data is read, transformed, and then released to an external sink without an
approved declassification step. Trust laundering has the same shape. If text
from the web is summarized, copied, or delegated, the system must still know
that it came from an untrusted source. These checks require provenance, labels
that change with execution, and constraints over paths through the agent's
history, rather than isolated allow/deny decisions at each tool call.

\section{Threat Model}
\label{sec:threat}

We consider a system with users, agents, tools, data sources, data sinks, and
optional sub-agents. Agents carry out user instructions through tool calls such as
database queries, API calls, file operations, web retrieval, and email. The
\agentflow{} runtime mediates tool invocations and modeled response sinks, and
we assume it is trusted and correctly implemented.

\agentflow{} is a policy-enforcement layer, not a general LLM alignment or
program-verification system. An attack is in scope when it attempts to cause a
policy-visible flow: a mediated tool call, delegation step, data release, or
modeled response-sink output that violates the configured policy.

\textbf{Trust boundaries:}
Security can fail at five boundaries: user--agent, agent--tool,
agent--sub-agent, agent--external system, and agent--tool registry. Agents may
be influenced by prompt injection or unintended model behavior~\cite{greshake23,
surveypromptinjection}; users delegate authority through tool permissions and
task instructions~\cite{wupermissions}. Tool implementations are trusted to
obey their specifications, but returned data may be untrusted. Sub-agents may
be compromised or manipulated~\cite{multiagentchallenges, promptinfection}.
External systems are disclosure boundaries, and dynamically registered tools
such as MCP tools may be malicious or overly permissive~\cite{mcpsurvey,
toolhijacker, logtoleak}.

\textbf{In-scope attacks:}
The attacker can influence policy-visible behavior in three
ways~\cite{surveycomprehensive, surveyattackdefense}. An indirect prompt injector
controls external content consumed by the agent, such as web pages, emails,
documents, or RAG entries~\cite{greshake23, liupromptinjection, agentpoison}. A
malicious tool provider controls tool metadata or outputs~\cite{toolhijacker}.
A compromised sub-agent tries to escalate privilege, influence other agents, or
exfiltrate data. These attackers cannot modify the runtime, change the loaded
policy, compromise vetted tools, or bypass mediated execution paths.
\agentflow{} blocks the resulting policy violations when they induce
unauthorized tool calls, delegations, data releases, or protected response-sink
flows. Section~\ref{sec:eval} evaluates enforcement on attacks drawn from
AgentDojo, AgentHarm, ASB, InjecAgent, BIPIA, and related benchmarks.

\textbf{Non-goals:}
\agentflow{} does not try to prevent every unsafe LLM behavior. Jailbreaks,
hallucinations, incorrect advice, and misleading final answers are out of scope
unless they cause a mediated policy violation or reach a modeled protected
response sink. The runtime also does not judge whether an allowed tool call is
semantically correct for the user's task. Covert and side channels, including
timing, token-length, and resource-usage channels, are outside the model unless
represented as sinks. Bugs or compromises in trusted components, or any action
path that bypasses the reference monitor, are not covered.

\textbf{Security goals:}
Given a trusted runtime and a policy that correctly captures the intended
security requirements for mediated tools and sinks, \agentflow{} enforces five
properties: data flows satisfy applicable flow and path constraints; delegated
agents do not receive data exceeding their authorized sensitivity; untrusted
data remains marked untrusted unless explicitly reclassified; sensitive data
reaches external sinks only through policy-approved flows; and multi-step
behaviors prohibited by path constraints are blocked.

\section{\agentflow{} Language Design}
\label{sec:design}

We now present the core language model of \agentflow{}. The language makes
policies over labeled runtime edges the common interface for expressing checks
such as access control, information-flow restrictions, sandboxing, approval
workflows, rate limits, and delegation safety in tool-using agents.

The language is organized around four core concepts: label system, flow graphs, flow rules, and path rules.

\subsection{Label System}

\agentflow{} label $\mathcal{L}$ is the fundamental mechanism for classifying data.
Inspired by the lattice model of Denning~\cite{denning76} and the decentralized label model of Myers and Liskov~\cite{myersliskov},
each label system is a triple drawn from a multi-dimensional lattice with three components: Sensitivity, Category set, and Trust.
Label $\mathcal{L}$ is defined as $\langle \textit{s}, \textit{C}, \textit{t} \rangle$

\textbf{Sensitivity:}
A sensitivity $s$ denotes the confidentiality level assigned to a data object.
Sensitivity is partially ordered by the relation $\flows$, which specifies the permitted flow of information between security classes.
\[
\scriptsize
\begin{array}{rcl}
s \in \mathit{sensitivity} ::= &
\mathit{Public}
\mid \mathit{Internal}
\mid \mathit{Confidential} \\
&
\mid \mathit{Restricted}
\mid \mathit{TopSecret}
\end{array}
\]
The order of sensitivities are defined using the relation $\preceq_s$: \[
\scriptsize
\textit{Public} \preceq_s \textit{Internal} \preceq_s
\textit{Confidential} \preceq_s \textit{Restricted}
\preceq_s \textit{TopSecret}
\]

\textbf{Categories:}
Categories are represented as tags drawn from a hierarchical taxonomy,
enabling policies to reason about data at different levels of abstraction.

Let $\mathcal{A}$ be a finite set of atomic category identifiers,
where $a \in \mathcal{A}$.
A category is a hierarchical sequence of identifiers defined as:
$c \in \mathit{Category} ::= a \mid c.a$.
For example, the category: \texttt{Personal.Identity.SSN} denotes a path in the category hierarchy.
A category set is a finite set of categories: $C \subseteq \mathit{Category}$.
We define the ancestor relation $\preceq_c$ over categories.
Intuitively, $c_1 \preceq_c c_2$ means that $c_1$ is equal to or more specific than $c_2$.
The relation is defined by the following rules:
\[
\frac{}
     {c \preceq_c c}
\; \textsc{C-Refl}
\qquad
\frac{c_1 \preceq_c c_2}
     {c_1.a \preceq_c c_2}
\; \textsc{C-Step}
\]
\[
\frac{c_1 \preceq_c c_2 \qquad c_2 \preceq_c c_3}
     {c_1 \preceq_c c_3}
\; \textsc{C-Trans}
\]
For example, $\texttt{Personal.Identity.SSN} \preceq_c \texttt{Personal.Identity}$.
We define standard operations over category sets,
including union $(\cup)$, intersection $(\cap)$,
removal $(\setminus_c)$, and hierarchical matching $(\models_c)$.

Hierarchical matching is defined as: $
C \models_c p
\iff
\exists q \in C.\; q \preceq_c p$.
For example, $\{
\texttt{Personal.Identity.SSN}
\}
\models_c
\texttt{Personal.Identity}$. Removing all categories matching a category $p$ is defined as:
$
C \setminus_c p
=
\{\, q \in C \mid \neg(q \preceq_c p) \,\}$

\textbf{Trust:}
A boolean indicating whether the data originates from a trusted source.
Untrusted data (e.g., web scrapes, content from potentially poisoned vector stores) carries
\texttt{t: false} and is subject to additional restrictions.

Now we define the various relation over the label $\mathcal{L}$, specifically over two labels $\ell_1$ and $\ell_2$,
where $\ell_1 = \langle s_1, C_1, t_1 \rangle$ and $
\ell_2 = \langle s_2, C_2, t_2 \rangle$.
The label ordering is defined by the relation $\sqsubseteq$:$ \
\ell_1 \sqsubseteq \ell_2
\iff
s_1 \preceq_s s_2
\;\land\;
C_1 \subseteq C_2
\;\land\;
(t_1 \lor \neg t_2)$
The trust constraint $(t_1 \lor \neg t_2)$ ensures that untrusted data
cannot flow into trusted destinations. Thus, trusted data may flow to
either trusted or untrusted contexts, but untrusted data may only flow
to untrusted contexts. If $t_1$ is true (trusted source), the flow is allowed; otherwise the destination
must be untrusted ($\neg t_2$). The join $\sqcup$ of two labels is the least upper bound, defined as:
$
\ell_1 \sqcup \ell_2
=
\langle
\max(s_1,s_2),
C_1 \cup C_2,
t_1 \land t_2
\rangle$.
The meet relation $\sqcap$ of two labels is the greatest lower bound, defined as:
$
\ell_1 \sqcap \ell_2
=
\langle
\min(s_1,s_2),
C_1 \cap C_2,
t_1 \lor t_2
\rangle
$.

\subsection{Flow Graph Model}
\agentflow{} separates the static policy schema from the runtime provenance
model. The schema names node and edge types that policies may mention; it is
not a control-flow graph. At runtime, each allowed operation appends a trace
event, creates a fresh data object, and records provenance edges from earlier
parent objects to the new output. Loops, retries, and memory updates therefore
appear as additional events, while the provenance graph remains append-only and
acyclic. Path rules are evaluated over lineage-connected subgraphs or over the
ordered session trace. This follows the data-flow view used in dynamic taint
analysis~\cite{schwartzsurvey}. Figure~\ref{fig:syntax} presents the
\agentflow{} syntax.

\textbf{Nodes:}
Nodes represent entities in the information-flow graph through which data may
originate, propagate, transform, or terminate. A node may correspond to a
data source (databases, APIs, file systems), an external sink (email systems, external APIs, logging),
or an internal computation component (agent memory, sandboxes).  Node is represented as
$name : \langle k, \mathcal{L}, \overline{(x, v)}, \overline{(f, \mathcal{L})}\rangle$. $name$ represents
the node name as $T :: i$, where $T$ is a node type name and $i$ is an instance name. For example, $\texttt{Database::records}$.
Each node is associated with a kind $k$, a security label $\mathcal{L}$ of the data associated with that node, along with optional
metadata property map $\overline{(x, v)}$, and field-level labels for structured data $\overline{(f, \mathcal{L})}$.
Metadata property map is used by policies to express semantic
constraints over nodes, such as whether a node is external, third-party,
financial, or agent-controlled.
For example, a node \texttt{Database::users} of kind \emph{source} may carry sensitivity \textit{Confidential},
categories $\{\texttt{Personal.Identity}, \texttt{Personal.Contact}\}$, and $\mathit{trusted} as \mathit{true}$,
with per-field overrides (e.g., the \texttt{ssn} field at \textit{Restricted}).
A sink such as \texttt{EmailSystem::corporate} is annotated with $\mathit{external} = \mathit{true}$
and $\mathit{irreversible} = \mathit{true}$.

\begin{figure}
\centering
\footnotesize
\setlength{\tabcolsep}{1pt}
\renewcommand{\arraystretch}{0.86}
\setlength{\abovedisplayskip}{2pt}
\setlength{\belowdisplayskip}{2pt}

\[
\begin{array}{@{}r@{\ }l@{\quad}l@{}}
k \in \mathit{kind} ::=
& \mathit{source} & \mbox{source}\\
\mid & \mathit{internal} & \mbox{internal}\\
\mid & \mathit{sink} & \mbox{sink}
\end{array}
\qquad
\begin{array}{@{}r@{\ }l@{\quad}l@{}}
r \in \mathit{rule} ::=
& \mathit{\removecat{c}} & \mbox{remove}\\
\mid & \mathit{\downgradelabel{s}} & \mbox{downgrade}\\
\mid & \mathit{\coloneqq s} & \mbox{set}
\end{array}
\]
\vspace{2pt}

\[
\begin{array}{@{}r@{\ }l@{\quad}l@{}}
\rho \in \mathit{prop} ::=
& \inherit{from} & \mbox{inherit}\\
\mid & \transform(\overline{r}) & \mbox{transform}\\
\mid & \bot & \mbox{opaque}\\
\mid & \constantp{s} & \mbox{constant}
\end{array}
\qquad
\begin{array}{@{}r@{\ }l@{}}
name := & T :: i\\[4pt]
f_\vdash \in \mathit{fperm} ::= & allow \mid deny\\[4pt]
\pi_\vdash \in \mathit{pperm} ::= & require \mid forbid\\[4pt]
\phi_{req} : & \mathit{st_r}\rightarrow\mathit{Bool}
\end{array}
\]
\vspace{2pt}

\[
\begin{array}{@{}r@{\ }l@{}}
n \in \mathit{Node} ::=
& name : \langle k, \mathcal{L}, \overline{(x, v)}, \overline{(f, \mathcal{L})}\rangle\\[4pt]
e \in \mathit{Edge} ::=
& ename : \langle n_1, n_2, \rho\rangle\\[4pt]
a \in \mathit{Agent} ::=
& (aname, name) : \langle s, \overline{e}, \overline{a}, b, \overline{a^p}\rangle\\[4pt]
ts \in \mathit{TaskScope} ::=
& (tname, name) : \langle \overline{e}, \overline{e} \rangle\\[4pt]
cl \in \mathit{Clause} ::=
& \langle n^{s}_\phi, n^{d}_\phi, e_\phi, \mathcal{L}_\phi, a_\phi, \phi, \phi_{req}\rangle\\[4pt]
f \in \mathit{Flow} ::=
& (f_\vdash, fname) : cl
\end{array}
\]

\[
\begin{array}{@{}r@{\ }l@{}}
e_m \in \mathit{EdgeMatch} ::=
& ename : \langle n^s_\phi,n^d_\phi, \mathcal{L}_\phi\rangle\\[4pt]
\pi_e \in \mathit{PathExpr} ::=
& e_m
\mid (\overline{\pi_e})
\mid \pi_e \to^{*} \pi_e
\mid \pi_e \to^{!} \pi_e\\
& \mid \big|_{i=1}^{n} \pi_{e_i}
\mid \pi_e^{\{m,n\}}\\[4pt]
scope \in
& \{\textit{lineage}, \textit{session}\}\\[4pt]
\pi \in \mathit{Path} ::=
& (\pi_{\vdash}, pname) :
\langle \pi_e, \overline{e}, \pi_{\phi}, scope \rangle\\[4pt]
P \in \mathit{Policy} ::=
& \langle \overline{f}, \overline{\pi} \rangle
\end{array}
\]

\caption{\agentflow{} language syntax}
\label{fig:syntax}
\vspace{-1em}
\end{figure}

\textbf{Edges:}
Edges represent permitted data-flow operations between nodes in the flow graph.
An edge defines how data propagates from a source node to a destination node,
along with the transformation applied to the security label during propagation.
Edge is represented as $ename : \langle n_1, n_2, \rho\rangle$, where $ename$ is the name of the edge, $n_1$ is the source node,
$n_2$ is the destination node, and $\rho$ defines the label propagation semantics applied during the flow.
The propagation mode $\rho$ defines how security labels are propagated
across an edge during data flow. Intuitively:
\begin{itemize}
\item $\inherit{from}$ propagates labels from source node $n$ without modification.
Mainly used for data reads (SQLQuery, FileRead).
\item $\transform(\overline{r})$ applies a sequence of transformation
rules $\overline{r}$ to the input label, such as category removal,
sensitivity downgrading, or trust modification. This rule can be for sanitization (Anonymize, Encrypt).
This corresponds to controlled declassification~\cite{declassification}.
\item $\bot$ denotes opaque propagation, where internal propagation
semantics are hidden or conservatively approximated inspired by language-based IFC~\cite{sabelfeldsurvey}.
This can be used for black-box operations (LLMCall, CodeExecute) where the exact information flow is unknown.
\item $\constantp{s}$ ignores the input label and replaces it with
the constant label $s$. Mainly used for pure functions (GetTime, Calculate).
\end{itemize}
Figure~\ref{fig:edge-examples} shows representative edge definitions.

\begin{figure}
\centering
\scriptsize
\begin{tabular}{@{}l@{\;\;}l@{\;\;}l@{\;\;}p{3.5cm}@{}}
\toprule
\textbf{Edge} & \textbf{From $\to$ To} & \textbf{Propagation} & \textbf{Semantics} \\
\midrule

\texttt{SQLQuery}
& source $\to$ internal
& $\inherit{\texttt{from}}$
& propagates source labels unchanged \\

\texttt{Anonymize}
& internal $\to$ internal
& $\transform(\overline{r})$
& removes identity categories and downgrades sensitivity \\

\texttt{LLMCall}
& internal $\to$ internal
& $\bot$
& conservatively propagates labels; also flows to provider sink \\

\bottomrule
\end{tabular}

\caption{Representative edge definitions in \agentflow{}.}
\label{fig:edge-examples}
\end{figure}

\subsection{Agent}
Agents represent autonomous execution principals that perform operations
over the information-flow graph. An agent defines the set of edges it is
authorized to invoke, the maximum sensitivity level it may access, and
optional delegation relationships to other agents. Formally, an agent is defined
as $(aname, name) : \langle s, \overline{e}, \overline{a}, b, \overline{a^p}\rangle$, where $aname$ represents
the agent name and $name$ of the form $T :: i$ represents the owner of the agent, $s$
represents the maximum sensitivity level accessible to the agent, $\overline{e}$
represents the set of edges the agent is authorized to invoke, $\overline{a}$ represents
the set of agents to which delegation is permitted, $b$ represents the condition specifying
whether the agent is a sub-agent or not, and last element $\overline{a^p}$ represents the parent agent information
(this field is optional). The \textit{owner} represented by $name$ specifies who controls the agent,
while the \textit{parent} ($\overline{a}$) specifies which other agent the agent operates under.

Figure~\ref{fig:agents} shows two representative agent definitions.

\begin{figure}[H]
\centering
\scriptsize
\begin{tabular}{@{}l@{\;\;}c@{\;\;}l@{\;\;}l@{}}
\toprule
\textbf{Agent} & \textbf{Max Sens.} & \textbf{Permitted Edges} & \textbf{Delegates} \\
\midrule
\texttt{main} & Confidential & \texttt{SQLQuery}, \texttt{FileRead} & \texttt{researcher}, \\
(owner: \texttt{alice}) & &  \texttt{VectorSearch}, \texttt{LLMCall} & \texttt{data\_analyst} \\
              & & \texttt{SendEmail}, \texttt{SlackPost} \\
              & & \texttt{Anonymize}, \texttt{Redact}, \\
              & & \texttt{Calculate}, \texttt{GetTime} & \\
\midrule
\texttt{researcher} & Internal & \texttt{WebScrape}, \texttt{APIGet}, & --- \\
(sub-agent of \texttt{main}) & & \texttt{Calculate}, \texttt{GetTime} & \\
\bottomrule
\end{tabular}
\caption{Agent definitions. Each agent has an owner, maximum sensitivity clearance,
and an edge whitelist implementing least privilege.
Sub-agents inherit their parent's owner but have independent clearance levels and restricted edge sets.}
\label{fig:agents}
\end{figure}

\textbf{Task scopes:}
Task scopes define temporary least-privilege restrictions applied to an
agent during the execution of a specific task. A task scope constrains
which edges an agent is permitted or forbidden to invoke within that
execution context. It is formally represented as $(tname, name) : \langle \overline{e}, \overline{e} \rangle$.
For example, a task scope \texttt{summarize\_emails} for agent \texttt{main} permits only $\{\texttt{ReadEmail}, \texttt{LLMCall}\}$ and explicitly denies $\{\texttt{SendEmail}, \texttt{HTTPPost}, \texttt{SQLWrite}\}$.

\subsection{Flow Rules}
Flow rules define policy constraints over information flows in the
graph. A flow rule either permits or denies a flow when a collection
of predicates over nodes, edges, labels, agents, and runtime state are
satisfied. Flow is representes as $(f_\vdash, fname) : cl$ where $f_\vdash$ denotes the flow permission
and $cl$ is the clause. Clause consist of following: $n^{s}_{\phi}$ denotes an optional predicate over source nodes,
$n^{d}_{\phi}$ denotes an optional predicate over destination nodes,
$e_{\phi}$ denotes an optional predicate over edges,
$\mathcal{L}_{\phi}$ denotes an optional predicate over propagated labels,
$a_{\phi}$ denotes an optional predicate over agents,
$\phi$ denotes an optional predicate over the runtime flow context,
and $\phi_{req}$ denotes an optional authorization requirement predicate
that must evaluate to true before a matching allow rule can authorize
the flow. Figure~\ref{fig:flow-rules} shows four representative deny rules.

\begin{figure}[H]
\centering
\scriptsize
\begin{tabular}{@{}l@{\;\;}c@{\;\;}p{4.5cm}@{}}
\toprule
\textbf{Rule Name} & \textbf{Effect} & \textbf{Condition} \\
\midrule
\texttt{no\_pii\_external} & deny & \texttt{carrying} $\texttt{Personal.Identity.*}$; \\
                              & & \texttt{to} external sink \\
\texttt{credentials\_never\_leave} & deny & \texttt{carrying} $\texttt{Credential.*}$; \texttt{to} any \\
\texttt{clearance\_cap} & deny & $\mathit{data.sensitivity} > \mathit{agent.sensitivity}$ \\
\texttt{untrusted\_to\_control} & deny & \texttt{from} untrusted node; \texttt{to} \texttt{AgentPlanner}; \\
                                       & & data may contain instructions \\
\bottomrule
\end{tabular}
\caption{Representative flow rules.}
\label{fig:flow-rules}
\end{figure}

\agentflow{} follows a \emph{default-deny} policy consistent with the principle of least privilege~\cite{lampson}:
any flow not explicitly permitted by an \texttt{allow} rule is rejected. If both \texttt{allow} and \texttt{deny} rules match a flow,
\texttt{deny} rules take precedence, following the conflict-resolution model used by \cedar{}~\cite{cedar}.

\subsection{Path Rules}
Path rules constrain multi-step data flow, not just individual hops.
$\sigma$ defines a single step-event consisting of $\langle n^s, e, n^d, \mathcal{L}, a\rangle$, where
$n^s$ represents the source node, $e$ represents the edge, $n^d$ represents the destination node, $\mathcal{L}$
represents the label, and $a$ represents the executing agent. A single step observe one flow action; hence we defined flow-rules
to inspect one step. There are certain security properties that needs to be assured for the sequence of actions.
For example, $\textit{Data must be anonymized before sending it out.}$ requires not only checking the edge $\texttt{SENDEMAIL}$, but also
needs guarantees about edge before $\texttt{SENDEMAIL}$.  Hence, we introduce notion of trace $\tau$. To check for assurance for a set of
properties we define a notion of edge match. An edge match $e_m$ defines a single-step trace pattern, where $ename$
denotes the matched edge name, $n^s_{\phi}$ denotes a predicate over source nodes, $n^d_{\phi}$ denotes
a predicate over destination nodes, and $\mathcal{L}_{\phi}$ denotes a predicate over propagated labels. We have a notion of path expression
represented as $\pi_e$, which defines the temporal patterns over execution traces. A path
expression describes how information-flow events are expected to occur over time within a trace.
The expression $\overline{\pi_e}$ matches a consecutive
ordered sequence of path expressions in a trace. The expression
$\pi_{e_1} \to^{*} \pi_{e_2}$ matches a trace in which a match for $\pi_{e_1}$ is
eventually followed by a later match for $\pi_{e_2}$, with zero or more
intermediate steps between them. The expression
$\pi_{e_1} \to^{!} \pi_{e_2}$ matches a trace in which a match for $\pi_{e_1}$ is
immediately followed by a step that does not match $\pi_{e_2}$.
The expression $\big|_{i=1}^{n} \pi_{e_i}$ matches any one of the
alternative path expressions in $\overline{\pi_e}$, while
$\pi_e^{\{m,n\}}$ matches between $m$ and $n$ consecutive repetitions
of the path expression $\pi_e$.

Path rules $\pi$ define temporal constraints over execution traces that cannot
be expressed using individual flow rules alone. A path rule constrains
how information-flow events may occur over time within a trace. A path rule is represented as
$(\pi_{\vdash}, pname) :
\langle \pi_e, \overline{e}, \pi_{\phi}, scope \rangle$, where the rule requires or forbids a matching trace pattern,
and $pname$ denotes the name of the path rule. The component
$\pi_e$ denotes the temporal path expression matched
against execution traces. The set $\overline{e}$ denotes an optional
set of edge names that must appear in a matched trace fragment.
The predicate $\pi_{\phi}$ denotes an optional predicate over matched
trace fragments, allowing additional constraints over execution history.
Finally, $scope \in \{\mathsf{lineage}, \mathsf{session}\}$ specifies
whether the rule is evaluated over lineage traces or complete session
traces. Figure~\ref{fig:path-rules} presents representative path rules.

\begin{figure}
\centering
\scriptsize
\setlength{\tabcolsep}{3pt}

\begin{tabular}{@{}p{3.1cm}p{0.9cm}p{4.2cm}@{}}
\toprule

\textbf{Rule} &
\textbf{Kind} &
\textbf{Semantics} \\

\midrule

\texttt{pii\_must\_sanitize}
&
require
&
Paths carrying \texttt{Personal.Identity.*} \\
& & data to external sinks must include \\
& & \texttt{Anonymize}, \texttt{Redact}, or \texttt{Encrypt}. \\

\texttt{no\_trust\_laundering}
&
forbid
&
Forbid paths where untrusted data reaches \\
& & an external sink after trust elevation. \\

\texttt{limit\_delegation\_depth}
&
forbid
&
Forbid paths containing $\geq 4$ consecutive \\
& & \texttt{Delegate} edges. \\

\texttt{no\_recon\_then\_exfil}
&
forbid
&
Forbid paths where \texttt{SQLQuery} over \\
& & \texttt{Confidential} data is eventually followed \\
& & by \texttt{HTTPPost} or \texttt{SendEmail} to \\
& & an external sink unless the path includes \\
& & \texttt{Anonymize}, \texttt{Redact}, or
\texttt{HumanReview}. \\

\bottomrule
\end{tabular}

\caption{Representative path rules. \texttt{require} rules enforce the
presence of specific edges in matching traces, while \texttt{forbid}
rules prohibit matching temporal trace patterns.}

\label{fig:path-rules}
\end{figure}

\subsection{Policy}
A policy $P$ defines the security constraints enforced by the system and is
composed of a set of flow rules and a set of path rules. Flow rules
govern individual information-flow steps, while path rules constrain
temporal patterns over execution traces.

\section{Formal Semantics}
\label{sec:semantics}
The static policy language defines the structure of permitted information
flows, but enforcement in \agentflow{} occurs dynamically during execution.
At runtime, the system intercepts each tool invocation, computes propagated
labels, evaluates flow and path policies, and updates execution lineage.

Rather than reasoning only over static node types, \agentflow{} tracks
information flow per concrete runtime data object.
This enables causal taint tracking, lineage-aware propagation, and
path-sensitive policy enforcement across multi-step agent executions.

To formalize runtime behavior, we introduce runtime data identifiers,
execution steps, traces, runtime states, runtime decisions, and intercepted
requests. Runtime data objects are identified by data identifiers
$d \in \mathit{DataId}$, each of which names a concrete value produced during
execution.

Execution proceeds through runtime steps. A step records a single
causally connected information-flow event, including the participating
nodes and edge, propagated labels, executing agent, produced data object,
and its lineage dependencies.

\[
\begin{array}{@{}r@{\ }l@{}}
\sigma \in \mathit{Step} ::= &
\langle
n^s,
e,
n^d,
\mathcal{L}_{in},
\mathcal{L}_{out},
a,
d,
\overline{d_p}
\rangle
\end{array}
\]

A trace records the sequence of runtime execution steps produced during
policy enforcement. This trace may contain repeated invocations of the same
edge, retries, and memory updates. Acyclicity applies only to the derived
lineage/provenance relation over runtime data identifiers: each allowed step
allocates a fresh output identifier whose parents are earlier identifiers.

\[
\begin{array}{@{}r@{\ }l@{}}
\tau \in \mathit{Trace} ::= \overline{\sigma}
\end{array}
\]

Runtime execution evolves over mutable runtime states that maintain
taint information, execution traces, lineage relations, active task scopes,
approved requirements, and active capabilities.

\[
\begin{array}{@{}r@{\ }l@{}}
st_r \in \mathit{RuntimeState} ::= &
\langle
M,
\tau,
Lin,
ts,
Req,
Cap
\rangle
\end{array}
\]

where:

\begin{itemize}
\item $M : \mathit{DataId} \rightharpoonup \mathcal{L}$ is the taint map.
It records the current label associated with each runtime data object.

\item $\tau$ is the execution trace. It records all runtime steps that
have been allowed during execution.

\item $Lin : \mathit{DataId} \rightharpoonup \overline{\mathit{DataId}}$
is the lineage map. If $Lin(d)=\overline{d_p}$, then the data object $d$
was derived from the parent data objects $\overline{d_p}$.

\item $ts$ is the currently active task scope. It restricts which edges may
be used during the current task execution.

\item $Req$ is the set of approved runtime requirements that may be checked
during policy evaluation.

\item $Cap$ is the set of active capabilities currently granted to the runtime.
\end{itemize}

Each intercepted runtime request evaluates to a policy decision determining
whether execution may proceed.

\[
\begin{array}{@{}r@{\ }l@{}}
D \in \mathit{Decision} ::= &
\mathit{allow}
\mid
\mathit{deny}
\mid
\mathit{pause}
\end{array}
\]

Runtime requests model intercepted tool invocations together with the
execution context required for policy evaluation.

\[
\begin{array}{@{}r@{\ }l@{}}
req \in \mathit{Request} ::= &
\langle
a,
e,
n^s,
n^d,
d,
\overline{f},
ctx,
cap
\rangle
\end{array}
\]
\begin{itemize}
\item $a$ is the agent issuing the tool invocation.

\item $e$ is the edge, or tool/action, being invoked.

\item $n^s$ is the source node from which the data is read.

\item $n^d$ is the destination node to which the result flows.

\item $d$ is the input runtime data identifier. It identifies the data object
whose taint and lineage should be used for this request.

\item $\overline{f}$ is the set of fields accessed by the request. When
present, the input label is computed from the labels of these fields rather
than from the whole source node.

\item $ctx$ is the release context used by policies that depend on the
runtime context of the request.

\item $cap$ is an optional capability required to perform the request.
The request may proceed only if the runtime state grants this capability.
\end{itemize}

The operational semantics define how runtime execution evolves under
schema $\Gamma$ and policy set $P$. Execution proceeds by intercepting
runtime requests, computing propagated labels, constructing candidate
execution steps, evaluating flow and path policies, and updating the
runtime state accordingly.

\subsection{Label Propagation Semantics}
\label{subsec:label-prop-semantics}

During execution, each edge propagates labels according to its propagation
annotation $\rho$. We write
\[
\rho \vdash \mathcal{L}_{in} \Downarrow \mathcal{L}_{out}
\]
to mean that applying propagation mode $\rho$ to input label
$\mathcal{L}_{in}$ produces output label $\mathcal{L}_{out}$.

We also define the meaning of individual transformation rules. We write
\[
r \vdash \mathcal{L} \Downarrow \mathcal{L}'
\]
to mean that applying rule $r$ to label $\mathcal{L}$ produces label
$\mathcal{L}'$.

Let $\mathcal{L} = \langle s, C, t\rangle$.

\[
\frac{}
{
\removecat{c}
\vdash
\langle s,C,t\rangle
\Downarrow
\langle s, C \setminus_c \{c\}, t\rangle
}
\; \textsc{R-RemoveCat}
\]

\[
\frac{
s' \leq s
}
{
\downgradelabel{s'}
\vdash
\langle s,C,t\rangle
\Downarrow
\langle s',C,t\rangle
}
\; \textsc{R-Downgrade}
\]

\[
\frac{}
{
\coloneqq s'
\vdash
\langle s,C,t\rangle
\Downarrow
\langle s',C,t\rangle
}
\; \textsc{R-Set}
\]

The propagation annotation of an edge lifts these rule-level transformations
to runtime label propagation.

\[
\frac{}
{
\inherit{from}
\vdash
\mathcal{L}
\Downarrow
\mathcal{L}
}
\; \textsc{P-Inherit}
\]

\[
\frac{}
{
\constantp{s'}
\vdash
\langle s,C,t\rangle
\Downarrow
\langle s',C,t\rangle
}
\; \textsc{P-Const}
\]

\[
\frac{}
{
\bot
\vdash
\mathcal{L}
\Downarrow
\langle \mathsf{high}, \emptyset, \mathsf{false}\rangle
}
\; \textsc{P-Opaque}
\]

\[
\frac{
r_1 \vdash \mathcal{L}_0 \Downarrow \mathcal{L}_1
\quad
\cdots
\quad
r_n \vdash \mathcal{L}_{n-1} \Downarrow \mathcal{L}_n
}
{
\transform(r_1,\ldots,r_n)
\vdash
\mathcal{L}_0
\Downarrow
\mathcal{L}_n
}
\; \textsc{P-Transform}
\]

\subsection{Input Label Resolution}
\label{subsec:input-label-resolution}

Before evaluating propagation policies, the runtime computes the input
label associated with a request. Input labels are resolved using the
source node, referenced fields, and lineage-connected runtime data
objects.

We write $
st_r, req \vdash \mathcal{L}_{in}
$
to mean that, under runtime state $st_r$, the request $req$ resolves
to input label $\mathcal{L}_{in}$.

Source reads inherit the label directly from the source node when no
field-level access is specified.

\[
\frac{
n^s =
name :
\langle
\mathsf{source},
\mathcal{L},
A,
F
\rangle
}
{
st_r,
\langle
a,
e,
n^s,
n^d,
d,
\emptyset,
ctx,
cap
\rangle
\vdash
\mathcal{L}
}
\; \textsc{IL-Source}
\]

Field-sensitive accesses compute the input label by joining the labels
of the referenced fields.

\[
\frac{
\forall (f_i,\mathcal{L}_i)\in F
\qquad
f_i \in \overline{f}
}
{
st_r,
\langle
a,
e,
n^s,
n^d,
d,
\overline{f},
ctx,
cap
\rangle
\vdash
\bigsqcup_i \mathcal{L}_i
}
\; \textsc{IL-Field}
\]

Runtime lineage propagates taint labels across derived runtime data
objects.

\[
\frac{
Lin(d)=\overline{d_p}
\qquad
M(d_i)=\mathcal{L}_i
}
{
st_r,
\langle
a,
e,
n^s,
n^d,
d,
\overline{f},
ctx,
cap
\rangle
\vdash
\bigsqcup_i \mathcal{L}_i
}
\; \textsc{IL-Lineage}
\]

\subsection{Candidate-Step Construction}
\label{subsec:candidate-step}

For each intercepted runtime request, the runtime constructs a candidate
execution step by resolving the input label, applying edge propagation,
and recording the resulting runtime flow event.

We write $st_r, req \Downarrow \sigma$
to mean that, under runtime state $st_r$, the request $req$ constructs
candidate runtime step $\sigma$.

Candidate-step construction combines input-label resolution and edge
propagation semantics.

\[
\frac{
\begin{array}{c}
st_r, req \vdash \mathcal{L}_{in}
\quad
\rho \vdash \mathcal{L}_{in}
\Downarrow
\mathcal{L}_{out}
\end{array}
}
{
\begin{array}{c}
st_r,
\langle
a,e,n^s,n^d,d,\overline{f},ctx,cap
\rangle
\\[0.3em]
\Downarrow
\\[0.3em]
\langle
n^s,e,n^d,
\mathcal{L}_{in},
\mathcal{L}_{out},
a,d,\overline{d_p}
\rangle
\end{array}
}
\; \textsc{CS-Build}
\]

where $\rho$ is the propagation annotation associated with edge $e$.

The runtime first resolves the input label associated with the request,
then propagates the label according to the edge propagation semantics,
and finally records the resulting runtime flow event as a candidate
execution step.

\subsection{Path Matching Semantics}
\label{subsec:path-matching}

Path expressions are evaluated over runtime execution traces.
We write $
\tau \models \pi_e$
to mean that execution trace $\tau$ satisfies path expression $\pi_e$.

An edge-match expression matches a singleton trace when the runtime
step satisfies the predicates associated with the edge matcher.

\[
\frac{
e_m \models \sigma
}
{
[\sigma] \models e_m
}
\; \textsc{PM-Edge}
\]

Sequential composition matches when the trace can be decomposed into
subtraces satisfying the component path expressions.

\[
\frac{
\tau_1 \models \pi_{e_1}
\qquad
\tau_2 \models \pi_{e_2}
}
{
\tau_1 \cdot \tau_2
\models
\pi_{e_1} \to^{*} \pi_{e_2}
}
\; \textsc{PM-Seq}
\]

Strict sequential composition additionally requires the matching
subtraces to be adjacent in the execution trace.

\[
\frac{
\tau_1 \models \pi_{e_1}
\qquad
\tau_2 \models \pi_{e_2}
\qquad
\mathsf{adjacent}(\tau_1,\tau_2)
}
{
\tau_1 \cdot \tau_2
\models
\pi_{e_1} \to^{!} \pi_{e_2}
}
\; \textsc{PM-Strict}
\]

Choice expressions match when at least one branch matches the trace.

\[
\frac{
\tau \models \pi_{e_j}
\qquad
1 \leq j \leq n
}
{
\tau \models \big|_{i=1}^{n} \pi_{e_i}
}
\; \textsc{PM-Choice}
\]

Bounded repetition matches when the trace satisfies the path expression
between $m$ and $n$ times.

\[
\frac{
m \leq k \leq n
\qquad
\forall i \in [1,k].\;
\tau_i \models \pi_e
}
{
\tau_1 \cdots \tau_k
\models
\pi_e^{\{m,n\}}
}
\; \textsc{PM-Repeat}
\]

Parenthesized expressions preserve the semantics of the enclosed path
expression.

\[
\frac{
\tau \models \pi_e
}
{
\tau \models (\pi_e)
}
\; \textsc{PM-Group}
\]

The path matching semantics define how runtime execution traces satisfy
path constraints used during policy evaluation.

\subsection{Policy Evaluation Semantics}
\label{subsec:policy-eval}

Policy evaluation determines whether a candidate runtime step may be
committed to the execution trace. We write
\[
P, st_r \vdash \sigma \Downarrow D
\]
to mean that policy $P$, under runtime state $st_r$, evaluates candidate
step $\sigma$ to decision $D$.

A policy decision is computed by evaluating flow clauses and path clauses
against the candidate step and the current execution trace. Flow policies
check whether the current step is permitted, while path policies check
whether adding the candidate step would satisfy or violate a trace-level
path constraint.

A candidate step is allowed when:

\begin{itemize}
\item no deny-flow clause matches the candidate step,
\item all required flow conditions are satisfied,
\item no forbidden path expression is matched by the extended trace,
\item all required path constraints are satisfied, and
\item all runtime requirements and capabilities referenced by the policy
evaluate successfully.
\end{itemize}

Otherwise, the candidate step evaluates to either $\mathit{deny}$ or
$\mathit{pause}$ depending on the failing policy condition.

Let $\tau \cdot \sigma$
denote the trace obtained by appending candidate step $\sigma$ to the
current execution trace $\tau$.

\[
\frac{
\begin{array}{c}
\exists (allow,fname):cl \in \overline{f}.\;
cl \models \sigma
\\[0.4em]
\forall (deny,fname):cl \in \overline{f}.
\;
cl \not\models \sigma
\\[0.4em]
\forall (require,pname):
\langle \pi_e,\overline{e},\pi_\phi,scope\rangle
\in \overline{\pi}.
\;
\tau \cdot \sigma \models \pi_e
\\[0.4em]
\forall (forbid,pname):
\langle \pi_e,\overline{e},\pi_\phi,scope\rangle
\in \overline{\pi}.
\;
\tau \cdot \sigma \not\models \pi_e
\end{array}
}
{
P,
\langle M,\tau,Lin,ts,Req,Cap\rangle
\vdash
\sigma
\Downarrow
\mathit{allow}
}
\; \textsc{PEA}
\]

\[
\frac{
\exists (deny,fname):cl \in \overline{f}.
\;
cl \models \sigma
}
{
P, st_r
\vdash
\sigma
\Downarrow
\mathit{deny}
}
\; \textsc{PED}
\]

\[
\frac{
\exists \phi_{req}.
\;
\phi_{req}(st_r)=\mathsf{false}
}
{
P, st_r
\vdash
\sigma
\Downarrow
\mathit{pause}
}
\; \textsc{PEP}
\]

where $cl \models \sigma$ denotes that candidate step $\sigma$
satisfies the predicates associated with clause $cl$.

\subsection{Runtime Transition Semantics}
\label{subsec:runtime-transition}

Runtime execution proceeds by intercepting requests, constructing
candidate execution steps, evaluating policies, and updating the runtime
state accordingly.

We write
\[
\Gamma,P \vdash
\langle st_r, req\rangle
\rightarrow
\langle st_r', D\rangle
\]
to mean that, under schema $\Gamma$ and policy set $P$, executing request
$req$ in runtime state $st_r$ produces decision $D$ and updated runtime
state $st_r'$.

An allowed execution appends the candidate step to the runtime trace and
updates the taint and lineage maps associated with the produced runtime
data object.

\[
\frac{
\begin{array}{c}
st_r, req \Downarrow \sigma
\quad
P, st_r \vdash \sigma \Downarrow \mathit{allow}
\\[0.4em]
\sigma =
\langle
n^s,
e,
n^d,
\mathcal{L}_{in},
\mathcal{L}_{out},
a,
d,
\overline{d_p}
\rangle
\\[0.4em]
st_r' =
\langle
M[d \mapsto \mathcal{L}_{out}],
\tau \cdot \sigma,
Lin[d \mapsto \overline{d_p}],
ts,
Req,
Cap
\rangle
\end{array}
}
{
\Gamma,P \vdash
\langle st_r, req\rangle
\rightarrow
\langle st_r', \mathit{allow}\rangle
}
\; \textsc{RTA}
\]

A denied execution leaves the runtime state unchanged.

\[
\frac{
\begin{array}{c}
st_r, req \Downarrow \sigma
\qquad
P, st_r \vdash \sigma \Downarrow \mathit{deny}
\end{array}
}
{
\Gamma,P \vdash
\langle st_r, req\rangle
\rightarrow
\langle st_r, \mathit{deny}\rangle
}
\; \textsc{RTD}
\]

A paused execution similarly leaves the runtime state unchanged while
awaiting additional approval or capability resolution.

\[
\frac{
\begin{array}{c}
st_r, req \Downarrow \sigma
\qquad
P, st_r \vdash \sigma \Downarrow \mathit{pause}
\end{array}
}
{
\Gamma,P \vdash
\langle st_r, req\rangle
\rightarrow
\langle st_r, \mathit{pause}\rangle
}
\; \textsc{RTP}
\]

The runtime transition semantics define the behavior of \agentflow{} by
combining label resolution, propagation semantics, policy evaluation, and
runtime state evolution.

\section{Static Verification and Analysis}
\label{sec:verification}
Beyond runtime enforcement, \agentflow{} checks bounded policy properties by
reducing them to satisfiability modulo theories (SMT), following the approach
pioneered by AWS Zelkova~\cite{zelkova} and extended to large-scale policy
verification. The verifier covers a structured fragment of \agentflow{} and
uses a bounded abstraction of the runtime's lineage semantics. It represents an
execution of length at most $k$ as symbolic steps with symbolic edge choices,
labels, lineage links, capabilities, and path-rule matches, then asks whether
any such trace can violate a target property.

\subsection{Static Well-Formedness of the Schema}

Before verification or execution, \agentflow{} performs a collection of
static consistency checks over the schema
\[
\Gamma =
\langle
\mathcal{N}, \mathcal{E}, \mathcal{A}, \mathcal{T},
\Sigma, \Sigma_f
\rangle ,
\]
where $\mathcal{N}$, $\mathcal{E}$, $\mathcal{A}$, and
$\mathcal{T}$ denote the declared nodes, edges, agents, and task scopes,
respectively, while $\Sigma$ and $\Sigma_f$ define the attribute and
field typing environments.

The well-formedness analysis ensures that all references are declared,
labels and propagation functions are well typed, edges connect compatible
nodes, agents reference only existing capabilities and delegation
targets, and policy clauses refer only to entities present in the schema.
These checks eliminate malformed configurations before execution and
establish the assumptions required by both the operational semantics and
the static verifier. The complete well-formedness judgments for schemas,
nodes, edges, agents, policies, and runtime states are presented in
Appendix~\ref{sec:appendix}.

\subsection{SMT Verifier}

The SMT verifier translates a policy set and a bound $k$ into Z3~\cite{z3}
constraints and checks the safety properties summarized below. Each symbolic
step chooses an edge, source and destination nodes, an agent, input and output
labels, and lineage dependencies. Flow rules, release rules, and path rules are
encoded as constraints over those steps. If the formula is SAT, the model is
decoded into a counterexample trace; if it is UNSAT, no violating bounded
abstract trace exists. This is a bounded guarantee under the verifier's
abstraction, not a claim about arbitrary unbounded LLM behavior. The prototype
rebuilds the SMT encoding from scratch for each verification run; in our
preliminary evaluation, these full reruns complete in under 0.5 seconds per
checked property.

\subsection{Runtime Enforcement Guarantees}
The verifier checks bounded instances of the same safety goals enforced by the
runtime, under the verifier's abstraction. Separately, we prove that the
\agentflow{} runtime preserves well-formedness and enforces the security goals
defined in Section~\ref{sec:threat} for mediated traces. The theorems below
summarize these runtime guarantees; complete proofs are deferred to
Appendix~\ref{sec:appendix}.

\begin{theorem}[Runtime Preservation]
Assume $\Gamma \vdash P$ and $\Gamma \vdash st_r$. If
$
\Gamma,P \vdash
\langle st_r,req\rangle
\rightarrow
\langle st_r',D\rangle,
$
then
$
\Gamma \vdash st_r' .
$
\end{theorem}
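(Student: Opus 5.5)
The proof goes by case analysis on the last rule used to derive $\Gamma,P \vdash \langle st_r,req\rangle \rightarrow \langle st_r',D\rangle$, namely \textsc{RTA}, \textsc{RTD}, or \textsc{RTP}. The runtime-state well-formedness judgment $\Gamma \vdash st_r$ (deferred to the appendix) presumably requires a small set of invariants, and I will work with these. The taint map $M$ assigns well-formed labels over the declared sensitivities and categories. $\mathrm{dom}(M)=\mathrm{dom}(Lin)$. Every parent listed in $Lin(d)$ is already in $\mathrm{dom}(M)$, which makes lineage acyclic and append-only. Every step of $\tau$ mentions only nodes, edges and agents declared in $\Gamma$, and its output identifier is in $\mathrm{dom}(M)$. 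The task scope $ts$ is declared in $\mathcal{T}$. Finally, $Req$ and $Cap$ refer only to declared entities.

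\textbf{Denied and paused steps.} For \textsc{RTD} and \textsc{RTP} the conclusion has $st_r' = st_r$. Hence $\Gamma \vdash st_r'$ is exactly the hypothesis, and nothing further is needed.

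\textbf{Allowed steps.} For \textsc{RTA} we have $st_r' = \langle M[d\mapsto\mathcal{L}_{out}], \tau\cdot\sigma, Lin[d\mapsto\overline{d_p}], ts, Req, Cap\rangle$. The components $ts$, $Req$ and $Cap$ are unchanged, so their invariants carry over. The remaining work splits into two lemmas.

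\emph{Label lemma.} If $\Gamma \vdash \mathcal{L}_{in}$ and $\rho$ is the well-typed annotation of a declared edge, then $\rho \vdash \mathcal{L}_{in} \Downarrow \mathcal{L}_{out}$ implies $\Gamma \vdash \mathcal{L}_{out}$. I prove this by induction on the propagation derivation:
\begin{itemize}
\item \textsc{P-Inherit} returns the input label unchanged.
\item \textsc{P-Const} and \textsc{R-Set} use a sensitivity that is well typed by the edge well-formedness check.
\item \textsc{P-Opaque} yields $\langle \mathsf{high},\emptyset,\mathsf{false}\rangle$, which is well formed.
\item \textsc{R-RemoveCat} only shrinks the category set.
\item \textsc{R-Downgrade} uses a declared $s'$.
\item \textsc{P-Transform} is handled by an inner induction over the rule sequence.
\end{itemize}
Before applying this lemma, I must also establish that $\mathcal{L}_{in}$ itself is well formed. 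I do this by cases on input-label resolution:
\begin{itemize}
\item \textsc{IL-Source}: the label is a declared node label.
\item \textsc{IL-Field}: the label is a join of declared field labels, which is well formed because joins of well-formed labels are well formed.
\item \textsc{IL-Lineage}: the label is a join of $M(d_i)$, which are well formed by the invariant on $M$.
\end{itemize}

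\emph{Structural lemma.} The step $\sigma$ refers only to declared nodes, edges and agents. This follows from the request being well formed relative to $\Gamma$; the well-formedness of requests may need to be added as an explicit premise or folded into $\Gamma \vdash st_r$.

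\textbf{Main obstacle.} The hardest part will be the lineage and freshness invariants. \textsc{RTA} overwrites $M$ and $Lin$ at $d$, but $d$ in the request is the \emph{input} identifier, and the rule as written writes $\overline{d_p}$ at that same $d$. To keep lineage acyclic and append-only, I need the prose assumption that each allowed step allocates a fresh output identifier, $d\notin\mathrm{dom}(M)$, whose parents $\overline{d_p}$ are already in $\mathrm{dom}(M)$.

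\begin{itemize}
\item First attempt: read freshness of the output identifier and well-scopedness of $\overline{d_p}$ as side conditions implicit in \textsc{CS-Build}, and carry them as part of request well-formedness.
\item Given freshness and parents in the old domain, acyclicity follows, because every new lineage edge points into strictly older identifiers. The domain agreement $\mathrm{dom}(M)=\mathrm{dom}(Lin)$ is preserved because both maps gain the same key.
\item Fallback, if freshness cannot be assumed: weaken the invariant to ``$Lin$ relates each identifier only to identifiers created earlier in $\tau$''. This is then maintained by indexing each identifier with its trace position, which the extension $\tau\cdot\sigma$ respects.
\end{itemize}
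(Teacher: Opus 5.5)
Your proposal is correct and follows the paper's own proof. Both do a case split on \textsc{RTA}, \textsc{RTD}, \textsc{RTP}: the denied and paused cases are immediate because $st_r'=st_r$, and the allowed case reduces to well-formedness of $\mathcal{L}_{out}$, of $\sigma$, and of the updated $M$ and $Lin$. The paper asserts each of these in one line, whereas you justify them and expose the premises it uses silently: a well-formed request, and a fresh output identifier whose parents lie in $\mathrm{dom}(M)$.

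One refinement: make that fresh identifier distinct from the request's input $d$. If you impose $d\notin\mathrm{dom}(M)$ on the request's own $d$, then together with your invariant $\mathrm{dom}(M)=\mathrm{dom}(Lin)$ this makes \textsc{IL-Lineage} (which needs $Lin(d)$ to be defined) inapplicable to every allowed step. With that split in place you no longer need your fallback, which as stated cannot rule out cycles through an overwritten $d$.
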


\begin{theorem}[Policy Enforcement Soundness]\label{thm:policy-enforcement-summary}
Let $st_r=\langle M,\tau,Lin,ts,Req,Cap\rangle$ be reachable under
$\Gamma$ and $P$. For every step $\sigma \in \tau$, $\sigma$ was appended
only after policy evaluation returned $\mathit{allow}$. Therefore,
$\sigma$ matches an applicable allow rule, violates no matching deny-flow
clause, satisfies all runtime requirements, and does not make the trace
violate any forbid path rule in $P$.
\end{theorem}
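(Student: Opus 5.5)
The plan is to argue by induction on the length of the reachable execution, that is, on the number of runtime transitions taken from an initial state to reach $st_r$. We take "reachable" to mean that there is an initial state $st_r^0=\langle M_0,\epsilon,Lin_0,ts_0,Req_0,Cap_0\rangle$ with empty trace and a sequence of requests $req_1,\ldots,req_m$ such that $\Gamma,P \vdash \langle st_r^{j-1},req_j\rangle \rightarrow \langle st_r^{j},D_j\rangle$ for each $j$, with $st_r^m=st_r$. The invariant to establish is the following: for every prefix decomposition $\tau=\tau'\cdot\sigma\cdot\tau''$, there is a state $st^\ast=\langle M^\ast,\tau',Lin^\ast,ts^\ast,Req^\ast,Cap^\ast\rangle$ along the execution with $P,st^\ast\vdash\sigma\Downarrow\mathit{allow}$. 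The base case is immediate because the trace is empty. Runtime Preservation is invoked along the way so that every intermediate state stays well-formed; this ensures that candidate-step construction and policy evaluation are defined at each stage.

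For the inductive step, we case on the rule used in the last transition. Under \textsc{RTD} and \textsc{RTP} the state, and in particular $\tau$, is unchanged, so the induction hypothesis transfers directly. Under \textsc{RTA} the new trace is $\tau\cdot\sigma$. Its premise $P,st_r\vdash\sigma\Downarrow\mathit{allow}$ is exactly the required witness for the new last step, evaluated at a state whose trace is the old $\tau$. Every earlier step keeps its witness from the hypothesis, since \textsc{RTA} only appends and never rewrites earlier entries of $\tau$. This argument relies only on the shape of the rules: $\tau$ is modified solely by \textsc{RTA}, and only by appending.

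Next we invert \textsc{PEA} on each witness. This yields an allow clause with $cl\models\sigma$, no deny clause with $cl\models\sigma$, and $\tau'\cdot\sigma\not\models\pi_e$ for every forbid rule. These give the matching allow rule, the absence of any violated deny clause, and the fact that appending $\sigma$ did not make the trace violate a forbid rule; this last point is how we read "does not make the trace violate" at the time of the append. For the runtime requirements, we use that \textsc{PEA} and \textsc{PEP} must be mutually exclusive for the monitor to be deterministic. If some $\phi_{req}(st^\ast)=\mathsf{false}$, the decision would have to be pause. We therefore add, as a side condition justified by the prose in the policy-evaluation subsection, that allow is derived only when every referenced $\phi_{req}$ holds. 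Equivalently, \textsc{PEA} implicitly includes $\forall\phi_{req}.\,\phi_{req}(st^\ast)=\mathsf{true}$.

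The main obstacle is this determinism and priority issue. As written, \textsc{PEA}, \textsc{PED} and \textsc{PEP} are not syntactically exclusive. In particular, \textsc{PEA} does not mention $\phi_{req}$, $Req$ or $Cap$. My first approach is to prove a small lemma: under $\Gamma\vdash P$, the three rules are made exclusive by reading \textsc{PEA} with the implicit requirement premise stated in the text, and \textsc{PED}'s premise directly contradicts \textsc{PEA}'s second premise. Given that lemma, an allowed step really did satisfy all requirements. A second, minor subtlety is that forbid rules are checked against the prefix $\tau'\cdot\sigma$ rather than the final $\tau$. The theorem should be read per step at the time of the append, and I would state this explicitly rather than claim the stronger statement for the full trace $\tau$.
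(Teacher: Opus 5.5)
Your proposal is correct and follows essentially the same argument as the paper. The paper also inducts on the trace, notes that the last step must have been appended by \textsc{RTA}, and inverts \textsc{PEA}. Your induction on the number of transitions makes the \textsc{RTD}/\textsc{RTP} cases explicit, and you also extract the allow-rule conclusion from \textsc{PEA}'s first premise, which the paper's appendix proof omits.

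You are right that \textsc{PEA} as written has no premise on $\phi_{req}$; the paper's proof simply asserts that \textsc{PEA} guarantees the runtime requirements. The implicit premise you add is therefore exactly what is needed, and once it is added, direct inversion gives the result. You do not need the determinism/exclusivity lemma or Runtime Preservation.
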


\begin{theorem}[G1: Flow Policy Enforcement]
Let $st_r=\langle M,\tau,Lin,ts,Req,Cap\rangle$ be reachable under
$\Gamma$ and $P$. For every step $\sigma \in \tau$, the flow represented
by $\sigma$ satisfies all applicable flow and path constraints in $P$.
\end{theorem}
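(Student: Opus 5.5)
The plan is to prove G1 by induction on the length of the derivation that makes $st_r$ reachable, and to reduce it to Theorem~\ref{thm:policy-enforcement-summary}. Reachability means there is a sequence of transitions $\Gamma,P \vdash \langle st_r^{i},req_i\rangle \rightarrow \langle st_r^{i+1},D_i\rangle$ starting from an initial state with empty trace.

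\textbf{Invariant and base case.} The induction invariant is that every $\sigma$ in the current trace satisfies all applicable flow constraints of $P$. It must also satisfy every path constraint of $P$ when evaluated on the prefix of the trace ending at $\sigma$. In the base case the trace is empty, so the invariant holds vacuously.

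\textbf{Inductive step.} For the inductive step I case-split on the last rule applied.
\begin{itemize}
\item Under \textsc{RTD} and \textsc{RTP} the state, and hence $\tau$, is unchanged, so the invariant transfers directly.
\item Under \textsc{RTA} the new trace is $\tau\cdot\sigma$, and the premise gives $P,st_r \vdash \sigma \Downarrow \mathit{allow}$. The only rule concluding $\mathit{allow}$ is \textsc{PEA}, so its premises are available:
  \begin{itemize}
  \item some allow clause matches $\sigma$;
  \item no deny clause matches $\sigma$;
  \item $\tau\cdot\sigma \models \pi_e$ for each require path rule;
  \item $\tau\cdot\sigma \not\models \pi_e$ for each forbid path rule.
  \end{itemize}
  These are exactly the flow and path constraints relevant to $\sigma$. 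Steps already in $\tau$ keep their guarantees by the induction hypothesis, because earlier entries of $\tau$ are never modified; the trace is append-only.
\end{itemize}
Alternatively, I can cite Theorem~\ref{thm:policy-enforcement-summary} directly for the per-step facts and only argue that these facts constitute ``all applicable flow and path constraints.''

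\textbf{Main obstacle.} The delicate point is the meaning of ``applicable'' and the scope of path constraints. \textsc{PEA} checks path rules on $\tau\cdot\sigma$ at the moment $\sigma$ is appended. It does not re-check them against later extensions, and forbid patterns are not monotone under extension. I would therefore state precisely that path constraints are satisfied relative to the prefix ending at $\sigma$, which is the natural reading for a reference monitor.

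\textbf{Scope and well-formedness side conditions.} Two further points need handling.
\begin{itemize}
\item \textbf{Scope field.} \textsc{PEA} ignores the $scope$ field ($\mathsf{lineage}$ vs.\ $\mathsf{session}$). I would argue that a lineage-scoped rule is checked against the lineage-restricted subtrace of $\tau\cdot\sigma$. Since that subtrace is determined by $Lin$, which \textsc{RTA} updates consistently, the premise specializes correctly.
\item \textbf{Runtime requirements.} I also need the requirements $\phi_{req}$ to hold. Because \textsc{PEP} yields $\mathit{pause}$ whenever some $\phi_{req}$ fails, determinism of evaluation would exclude $\mathit{allow}$ in that case.
\end{itemize}
The rules as written are not obviously mutually exclusive, so I would invoke Runtime Preservation ($\Gamma\vdash st_r$) together with the intended priority deny $>$ pause $>$ allow as a side condition to make the decision function well defined.
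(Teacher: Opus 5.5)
Your proposal is correct and follows essentially the paper's route. The paper obtains G1 as an immediate corollary of Policy Enforcement Soundness, which is proved by exactly your induction (only \textsc{RTA} extends the trace, and only \textsc{PEA} yields $\mathit{allow}$), and then reads the constraints off \textsc{PEA}'s premises. Your explicit \textsc{RTD}/\textsc{RTP} cases, your inclusion of the allow-clause and require-path premises, and your prefix-relative reading of path rules are more careful than the paper, which relies on that reading without stating it.

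Two of your side remarks are not needed, and neither goes through as stated, so drop them rather than rely on them.
\begin{itemize}
\item \textbf{Scope.} \textsc{PEA} checks every path rule on the full trace $\tau\cdot\sigma$ and ignores $scope$. That verdict does not transfer to the lineage-restricted subtrace. A require pattern can be met by a sanitizing step on unrelated data outside the lineage, and under whole-trace matching forbid verdicts do not transfer either.
\item \textbf{Runtime requirements.} Runtime Preservation only yields $\Gamma\vdash st_r$. It says nothing about \textsc{PEA} and \textsc{PEP} both applying, so the deny $>$ pause $>$ allow priority is an extra assumption on the semantics. The requirement claim is better obtained by reading $\phi_{req}$ as part of $cl\models\sigma$ for the matching allow clause. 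That is the only way \textsc{PEA} as written can deliver it.
\end{itemize}
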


\begin{theorem}[G2: Delegation Safety]
Assume $P$ contains a delegation clearance rule that denies any delegation
step sending data with output label $\mathcal{L}_{out}$ to a delegated
agent $a'$ whenever
\[
\mathcal{L}_{out}.s \not\preceq_s a'.s_{\max}.
\]
Then no delegated agent receives data exceeding its authorized sensitivity
level in any reachable trace.
\end{theorem}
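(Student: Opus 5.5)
The plan is to reduce G2 to Policy Enforcement Soundness (Theorem~\ref{thm:policy-enforcement-summary}) by induction over the length of a reachable execution. First I will fix what ``a delegated agent receives data'' means in the model. Data enters a delegated agent $a'$ only through a committed step $\sigma=\langle n^s,e,n^d,\mathcal{L}_{in},\mathcal{L}_{out},a,d,\overline{d_p}\rangle\in\tau$ whose edge $e$ is a delegation edge targeting $a'$. The label the agent receives is the $\mathcal{L}_{out}$ recorded in $\sigma$, which by rule \textsc{RTA} is also $M(d)$ at commit time. This identification matters. The statement quantifies over reachable traces, and by the transition rules the only way a step enters $\tau$ is via \textsc{RTA}; \textsc{RTD} and \textsc{RTP} leave $st_r$ unchanged. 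So I will state an invariant over reachable states: every delegation step in $\tau$ satisfies $\mathcal{L}_{out}.s\preceq_s a'.s_{\max}$.

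The induction runs as follows. In the base case the initial state has an empty trace, so the invariant holds vacuously. For the inductive step, take a transition $\langle st_r,req\rangle\rightarrow\langle st_r',D\rangle$.
\begin{itemize}
\item If $D\in\{\mathit{deny},\mathit{pause}\}$, then $st_r'=st_r$ and the invariant carries over.
\item If $D=\mathit{allow}$, then $\tau'=\tau\cdot\sigma$, and the previously committed steps are unchanged because traces are append-only. So only the new $\sigma$ needs checking.
\end{itemize}
In the allow case, \textsc{RTA} requires $P,st_r\vdash\sigma\Downarrow\mathit{allow}$, and the premises of \textsc{PEA} give $cl\not\models\sigma$ for every deny clause. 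Suppose toward contradiction that $\sigma$ is a delegation to $a'$ with $\mathcal{L}_{out}.s\not\preceq_s a'.s_{\max}$. Then the assumed clearance rule's clause matches $\sigma$, which contradicts the \textsc{PEA} premise. Equivalently, \textsc{PED} would fire; I must also argue that \textsc{PEA} and \textsc{PED} cannot both derive a decision for the same $\sigma$, which follows from the complementary premise on deny clauses. Because $\preceq_s$ is a total order on the five sensitivity levels, $\not\preceq_s$ is decidable, so the clause's predicate is well defined.

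I expect the main obstacle to be precisely connecting the informal hypothesis to the clause semantics $cl\models\sigma$. Concretely, I need that the clearance rule's $e_\phi$ recognizes delegation edges, and that $a'$ with its $s_{\max}$ is determinable from $\sigma$, presumably from $n^d$ naming the delegated agent. I also need that $\mathcal{L}_\phi$ is evaluated on $\mathcal{L}_{out}$ rather than $\mathcal{L}_{in}$. I would make these explicit as a lemma about the clearance clause, appealing to schema well-formedness $\Gamma\vdash P$ to guarantee that $a'$ is a declared agent. If needed, I would also use Runtime Preservation to keep $\Gamma\vdash st_r$ along the run, so that lookups of $a'.s_{\max}$ are defined.

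A secondary point is that a delegated agent might later receive data through a non-delegation edge, for example by reading $d$ via lineage. I would handle this by scoping ``receives'' to delegation-boundary crossings, as the statement does. I would also remark that subsequent reads by $a'$ are governed separately by the \texttt{clearance\_cap} flow rule.
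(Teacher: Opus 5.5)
Your proof is correct and is essentially the paper's argument. The paper assumes a violating delegation step $\sigma\in\tau$, notes that it matches the clearance deny clause (treating your ``main obstacle'' as immediate from the hypothesis), and contradicts G1; you instead inline the induction over transitions that underlies G1 and Policy Enforcement Soundness, and your detour through \textsc{PED} and its exclusivity with \textsc{PEA} is unnecessary, since \textsc{PEA} is the only rule concluding $\mathit{allow}$ and its deny-clause premise already yields the contradiction.
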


\begin{theorem}[G3: Trust Preservation]
Assume no explicit trust-reclassification transformation is applied. If
data originates from an untrusted source, then every runtime data object
derived from it remains untrusted in the taint map $M$.
\end{theorem}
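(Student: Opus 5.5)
Plan: induction on the length of the derivation that makes $st_r$ reachable, i.e.\ on the number of \textsc{RTA}/\textsc{RTD}/\textsc{RTP} transitions from the initial state. The invariant $I(st_r)$ states: for every $d \in \mathrm{dom}(M)$, if $d$ is \emph{derived from an untrusted source} then $M(d).t = \mathsf{false}$. Here ``derived from'' is the reflexive--transitive closure of the lineage relation $Lin$ together with the source node read by the step that created $d$. Because the provenance graph is append-only and acyclic (each allowed step allocates a fresh $d$ whose parents are earlier identifiers), this closure is well founded. So I would do an inner induction on the position of $d$'s creating step in $\tau$. In the initial state $M$ is empty, or contains only source objects labelled by their declared node labels, so $I$ holds trivially or by definition of ``untrusted source'' as $t=\mathsf{false}$. \textsc{RTD} and \textsc{RTP} leave the state unchanged, so only \textsc{RTA} needs work.

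For \textsc{RTA}, the new entry is $M[d\mapsto\mathcal{L}_{out}]$ with $\mathcal{L}_{out}$ obtained via \textsc{CS-Build}. Suppose $d$ is derived from an untrusted source. Then either the source node itself is untrusted, or some parent $d_i\in\overline{d_p}$ is, and by the induction hypothesis $M(d_i).t=\mathsf{false}$. I first show $\mathcal{L}_{in}.t=\mathsf{false}$ by cases on input resolution. Under \textsc{IL-Source} the label is the node's, which is untrusted. Under \textsc{IL-Lineage} and \textsc{IL-Field} the input is a join, and $\sqcup$ computes trust as $t_1\wedge t_2$, so one false conjunct makes the result false.

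Next I show that propagation preserves $t=\mathsf{false}$ by case analysis on $\rho$. \textsc{P-Inherit} is the identity. \textsc{P-Opaque} outputs $\langle \mathsf{high},\emptyset,\mathsf{false}\rangle$. \textsc{P-Const} keeps $t$. For \textsc{P-Transform} I use an inner induction on the rule list, since each of \textsc{R-RemoveCat}, \textsc{R-Downgrade} and \textsc{R-Set} leaves the third component unchanged. The hypothesis ``no explicit trust-reclassification transformation'' is exactly what excludes any further rule form touching $t$, which the informal description of $\transform$ mentions as ``trust modification''. Any $d'\neq d$ keeps its old label, and its derivation status is unchanged because lineage is only extended at the fresh $d$.

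The main obstacle is pinning down the semantics precisely enough for this to go through. Three gaps need fixing. First, \textsc{IL-Field} and \textsc{IL-Source} as written ignore lineage, so an untrusted parent could be dropped. Second, \textsc{CS-Build} leaves $\overline{d_p}$ unconstrained. Third, well-formedness must guarantee that $d$ is fresh rather than overwriting an existing entry. I would state the needed side conditions and justify them from Runtime Preservation ($\Gamma\vdash st_r'$, which gives freshness and parents in $\mathrm{dom}(M)$). For the first two, I would interpret input resolution as joining the node/field label with the lineage labels of $\overline{d_p}$, where $\overline{d_p}$ includes the request's input $d$. If that interpretation is not available, I would weaken ``derived from'' to derivations visible through the resolved input label.
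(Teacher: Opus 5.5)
\textbf{Gap: the side conditions cannot be derived from Runtime Preservation.}

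Your skeleton is the paper's. The paper's proof also inducts along $Lin$, uses that $\sqcup$ combines trust by $\wedge$, asserts that propagation cannot restore trust, and lets \textsc{RTA} write $\mathcal{L}_{out}$ into $M$. Your rule-by-rule check of $\rho$ is right: no rule of the calculus can turn $t$ from $\mathsf{false}$ to $\mathsf{true}$, and \textsc{P-Opaque} forces $\mathsf{false}$, so the no-reclassification hypothesis is formally vacuous. Your observation that \textsc{IL-Source} and \textsc{IL-Field} ignore lineage identifies a step the paper only asserts. ``$\mathcal{L}_{in}$ is computed from the labels of the lineage-connected parent objects'' holds only for \textsc{IL-Lineage}. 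Even there the join ranges over $Lin(d)$ of the request's input, not over the $\overline{d_p}$ that \textsc{CS-Build} records.

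The step that fails is deriving freshness and $\overline{d_p}\subseteq\mathrm{dom}(M)$ from Runtime Preservation. The paper never defines $\Gamma\vdash st_r$, and its preservation proof only checks that labels and steps are well-formed. More importantly, \textsc{CS-Build} records the request's own input identifier $d$ as the produced object. Under \textsc{IL-Lineage} that identifier must already be in $\mathrm{dom}(Lin)$, so \textsc{RTA} really does overwrite entries and no state invariant can yield freshness.

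Under the literal rules the conclusion can fail. Take a policy allowing both of the following steps. A \texttt{WebScrape} step sets $M(d).t=\mathsf{false}$. Then a request with a fresh identifier $d'$ reads the trusted \texttt{Database::users} with $\overline{f}=\emptyset$. It may resolve by \textsc{IL-Source} to the trusted node label, propagate by \textsc{P-Inherit}, and be committed with the unconstrained $\overline{d_p}=[d]$. This gives $Lin(d')=[d]$ but $M(d').t=\mathsf{true}$. Reusing $d$ itself instead simply overwrites $M(d)$ to trusted.

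So your side conditions must be \emph{imposed} by changing the semantics, not derived:
\begin{itemize}
\item \textsc{CS-Build} allocates a fresh $d'\notin\mathrm{dom}(M)$ whose parents $\overline{d_p}\subseteq\mathrm{dom}(M)$ include the request's $d$.
\item Input resolution returns the node label (joined with any selected field labels) $\sqcup\bigsqcup_{d_i\in\overline{d_p}}M(d_i)$.
\end{itemize}
Including the node label matters. Otherwise an untrusted node with trusted field overrides breaks your base case under \textsc{IL-Field}.

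Well-foundedness of your inner induction then comes from freshness, not from the paper's prose about acyclicity. Your fallback of weakening ``derived from'' does not remove the need for freshness, because any overwrite of $M(d)$ defeats an invariant of the form $M(d).t=\mathsf{false}$. With these hypotheses made explicit, your induction is a correct and strictly more careful version of the paper's proof.
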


\begin{theorem}[G4: External Disclosure Control]
Assume $P$ denies sensitive data flows to external sinks unless an applicable
allow or release rule authorizes the flow and no matching deny or path rule
blocks it. Then sensitive data may reach an external sink only through such an
authorized flow.
\end{theorem}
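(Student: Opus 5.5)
The argument is an induction over the reachable runtime states that combines Theorem~\ref{thm:policy-enforcement-summary} with the default-deny and deny-precedence structure of rule \textsc{PEA}. The claim is about every step of the trace that reaches an external sink. So the invariant is: for every reachable $st_r=\langle M,\tau,Lin,ts,Req,Cap\rangle$ and every $\sigma\in\tau$ whose destination $n^d$ is an external sink and whose output label $\mathcal{L}_{out}$ is sensitive, $\sigma$ matches an applicable allow or release clause, matches no deny clause, and $\tau'\cdot\sigma$ violates no forbid or require path rule. Here $\tau'$ is the trace prefix present when $\sigma$ was committed. This fixes what ``authorized flow'' means in the conclusion, namely the conjunction of premises of \textsc{PEA}.

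\textbf{Base case.} The initial state has an empty trace, so the invariant holds vacuously.

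\textbf{Inductive step.} Case on the transition rule. Under \textsc{RTD} and \textsc{RTP} the state is unchanged, so the invariant is inherited. Under \textsc{RTA} the new trace is $\tau\cdot\sigma$, and $\sigma$ is the only new step. The earlier steps of $\tau$ keep their witnessing prefixes, because traces are append-only. For $\sigma$ itself, the premise of \textsc{RTA} gives $P,st_r\vdash\sigma\Downarrow\mathit{allow}$. Inverting \textsc{PEA} then yields each of the four premises directly.

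The hypothesis on $P$ is used to rule out vacuous allowance. The policy denies sensitive-to-external flows unless an allow or release rule authorizes them. So if $\sigma$ is sensitive and external and no release or allow clause matches, the assumed deny clause matches $\sigma$. That contradicts the \textsc{PEA} premise $\forall(deny,\_):cl.\;cl\not\models\sigma$.

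The main obstacle is showing that ``sensitive'' is evaluated on the right label. Data could reach the sink through intermediate steps, for example via an $\bot$-edge such as \texttt{LLMCall}, and the sink step's $\mathcal{L}_{in}$ must still reflect that sensitivity. I would handle this in two parts.

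First, show that \textsc{IL-Lineage} computes $\mathcal{L}_{in}$ as the join of the parents' labels in $M$, so sensitivity cannot decrease except through explicit \textsc{R-Downgrade}, \textsc{R-Set}, or \textsc{P-Const} steps. Each such step is itself a committed, policy-evaluated step, and it counts as part of an authorized flow. \textsc{P-Opaque} yields $\mathsf{high}$, so it only raises sensitivity. The content of this part is a monotonicity lemma, proved by a small induction along $Lin$ and analogous to the argument for G3.

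Second, the premise of \textsc{PEA} checks $\tau\cdot\sigma$ against every path rule. So path-level blocking (e.g.\ \texttt{pii\_must\_sanitize}) is enforced on the full lineage, not only at the final hop.

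I would also note explicitly the assumption that any sink release goes through a mediated request, as the threat model stipulates. Without it an unmediated channel would fall outside the induction. With these pieces the invariant holds for all reachable states, which is exactly the statement.
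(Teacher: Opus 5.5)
Your invariant argument is the paper's proof. Every step of a reachable trace was appended by \textsc{RTA}, so \textsc{PEA} returned $\mathit{allow}$ on it, and inverting \textsc{PEA} gives authorization; the hypothesis on $P$ turns an unauthorized sensitive-to-external step into a matching deny clause. The paper runs this as a one-shot contradiction on the offending $\sigma$, using the same append-only fact that your induction over reachable states (the induction behind Policy Enforcement Soundness) re-derives. That part is correct and complete on its own.

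The part you call the main obstacle is not needed, and as stated it would not go through.

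It is not needed because both the paper and your own invariant read sensitivity off the sink step's label. Moreover, every committed step already matches an allow clause and no deny clause by \textsc{PEA}'s premises, however its label was computed.

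It would not go through for two reasons.
\begin{itemize}
\item \textsc{P-Opaque} returns $\langle \mathsf{high},\emptyset,\mathsf{false}\rangle$. An \texttt{LLMCall} therefore raises $s$ but erases every category. Category-based sensitivity such as \texttt{Personal.Identity}, which is what \texttt{no\_pii\_external} tests, thus disappears along lineage with no downgrade step, so ``\textsc{P-Opaque} only raises sensitivity'' holds only for the level component.
\item \textsc{IL-Source}, \textsc{IL-Field} and \textsc{IL-Lineage} are overlapping rules, so nothing forces the sink step's $\mathcal{L}_{in}$ to be the lineage join. \textsc{IL-Field} can resolve from field labels alone. \textsc{IL-Lineage} as written joins only the labels of $d$'s parents, not $M(d)$ itself.
\end{itemize}
So the ``small induction along $Lin$'' is not available from these rules. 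Either drop that part, or state its extra semantic assumptions explicitly: level component only, with input labels resolved by a lineage join that includes the input object's own label.
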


\section{Implementation}
\label{sec:implementation}

Our prototype is a Python policy engine with a compiler, runtime monitor, and
SMT verifier.

\subsection{Policy Compiler}

The policy compiler parses \agentflow{} policy files (labels, nodes, edges, agents, flow rules, release rules, and path rules) into an internal representation. The compiler performs syntax and schema checks before the policy is loaded:
\begin{itemize}[leftmargin=*]
\item \textbf{Policy construction:} Declared nodes, edges, agents, labels, task scopes, requirements, release rules, and path rules are parsed into the runtime policy representation.
\item \textbf{Local validation:} The compiler rejects malformed syntax, invalid sensitivity levels, invalid flow/path effects, unknown propagation operators, and unresolvable provider side-channel references such as \texttt{also\_flows\_to}.
\item \textbf{Fail-closed handoff:} References that depend on runtime task state, selected fields, or dynamically constructed tool calls are checked by the monitor at interception time; unknown agents, edges, nodes, capabilities, or task scopes deny rather than silently allowing the call.
\end{itemize}

\subsection{Runtime Engine}

The runtime engine intercepts tool calls and modeled response sinks in the agent loop and evaluates the policy:
\begin{enumerate}[leftmargin=*]
\item \textbf{Edge construction:} Identifies the source node, destination node, and edge from the intercepted operation.
\item \textbf{Scope and capability checks:} Applies active task scopes, required capabilities, and runtime requirements before the operation is allowed to proceed.
\item \textbf{Label computation:} Applies the tool's propagation function, selected-field labels, release contexts, and optional runtime scanners to compute the output label.
\item \textbf{Flow and side-channel evaluation:} Evaluates matching deny and allow rules, including implicit provider flows modeled with \texttt{also\_flows\_to}.
\item \textbf{Path rule evaluation:} Checks the updated lineage or session trace against configured path rules.
\item \textbf{Decision:} Returns ALLOW, DENY, or PAUSE.
\item \textbf{State update:} On ALLOW, updates the taint map, lineage map, and trace.
\end{enumerate}

The engine integrates with agent frameworks via middleware that wraps tool call
functions, following the inline reference monitor approach~\cite{sasi}. In the
LangChain integration, for example, tool wrappers preserve the original tool
interface but call the \agentflow{} runtime before execution. The wrapper maps
the framework's tool name and arguments to an \agentflow{} edge, source node,
destination node, selected fields, and optional capability context. If the
decision is \textsc{allow}, the wrapper executes the tool and records the
produced data identifier, output label, parent data identifiers, and trace
entry; if the decision is \textsc{deny} or \textsc{pause}, the tool is not
invoked. The AgentDojo and AgentDyn adapters use the same runtime interface,
with suite-specific mappings from benchmark tools to policy nodes and edges.

The runtime also supports \emph{field-level label tracking}: when a tool reads
specific fields from a source node (e.g., \texttt{SELECT name, email} rather
than \texttt{SELECT *}), the input label is computed as the join of only those
fields' labels rather than the conservative node-level label, reducing
overtainting. Additionally, edges with \texttt{runtime\_scan} categories can
invoke an optional callback to detect sensitive data in LLM output (e.g.,
scanning for SSN patterns), upgrading the output label accordingly. These scanners are
conservative label upgraders, not semantic safety classifiers; false negatives
are outside the formal guarantee unless the corresponding sink or release is
modeled by the policy.

\section{Evaluation}
\label{sec:eval}

Our preliminary evaluation asks five questions: whether runtime enforcement
blocks policy-visible attacks while preserving task completion; whether the same
mechanisms transfer beyond AgentDojo~\cite{agentdojo}; whether the policies
cover attack surfaces from additional benchmarks; whether the SMT verifier
catches realistic policy mistakes; and whether mediation adds meaningful
latency. We also compare with related defenses when public artifacts make the
comparison meaningful. The results below are best interpreted as initial
prototype evidence for the policy layer and its enforcement model, rather than
as a comprehensive characterization of all agent security behaviors.

\subsection{Experimental Setup}

\noindent\textbf{Benchmarks:}
We use AgentDojo~\cite{agentdojo}, which provides 949 injected test cases
across \textsc{Banking} (144), \textsc{Travel} (140), \textsc{Workspace}
(560), and \textsc{Slack} (105). Each case pairs a benign task with an
injection attack. Rather than using AgentDojo's raw security bit, we
post-process traces with a four-way outcome taxonomy and report \emph{utility}
(benign task completion under the injected case) and \emph{confirmed
compromise} (observed attack success). The artifact includes case-level traces
and the classifier script.
We additionally run AgentDyn's 200-case \textsc{Dailylife} stress test and
five breadth checks. ASB~\cite{asb} provides 10 scenarios, 51 benign tasks, 20
normal tools, and 400 attacker tools; we evaluate both its deterministic
task/tool matrix and its upstream AIOS planner with \agentflow{} inserted at
the tool-call gate. We also replay InjecAgent~\cite{injecagent}, BIPIA
response-sink splits~\cite{bipia}, AgentHarm tool-target
behaviors~\cite{agentharm}, and MCPTox poisoned MCP metadata~\cite{mcptox} as
deterministic authorization checks.

\noindent\textbf{Agent configuration:}
We use a fixed frontier LLM configuration for the frozen AgentDojo and
AgentDyn runs and for denominator-aligned Progent comparisons. The upstream ASB
DPI harness uses the model configuration from that artifact; the remaining
breadth checks are deterministic replays. For each benchmark suite, we write
one \agentflow{} policy over the suite's tools and data sources; policies are
not specialized to individual tasks or attack instances.

\noindent\textbf{Baselines:}
Our primary baseline is the undefended system. We also reproduce the closest
public programmable-policy baseline, Progent~\cite{progent}, using the same
model family and public manual-policy path. \textsc{Slack} and
\textsc{Banking} are denominator-aligned with our frozen tables;
\textsc{Travel} and \textsc{Workspace} use smaller Progent-bundled workloads,
so we keep them supplemental.

\noindent\textbf{Policy structure:}
Table~\ref{tab:policy-summary} summarizes the suite-level policies used in the
main agent evaluations. This is not a separate policy-authoring study; the goal
is to show what security state the policies model. Across suites, the policies
mark sensitive sources, untrusted inputs, protected sinks, and task-scoped
releases. The label dimensions serve different roles: sensitivity identifies
data that should not leave freely, category separates fields such as email
addresses from credentials or payment data, and trust prevents web or
attacker-controlled content from authorizing privileged actions.

\begin{table}[t]
\centering
\caption{Policy structure for the main agent evaluations. Policies are written
once per suite, not per task or attack instance.}
\label{tab:policy-summary}
\scriptsize
\begin{adjustbox}{max width=\columnwidth}
\begin{tabular}{@{}l p{0.22\columnwidth} p{0.22\columnwidth} p{0.26\columnwidth}@{}}
\toprule
\textbf{Suite} & \textbf{Sensitive / Untrusted Inputs} & \textbf{Protected Sinks} & \textbf{Representative Rules} \\
\midrule
Banking & Accounts, balances, cards; injected messages & Transfers, external messages & Scope money movement; deny card/credential release \\
Travel & Reservations, calendar, payments; web results & Booking APIs, calendar updates & Require task-scoped booking and calendar capabilities \\
Workspace & Docs, email, contacts; external mail/web text & External email, document sharing & Allow typed releases; deny sensitive external sends \\
Slack & Channels, DMs, web pages; attacker messages & DMs, channels, final responses & Separate channel/DM/web sinks; block untrusted-to-privileged flows \\
AgentDyn & Personal data, files, finance, calendar; web/tool results & Email, files, finance, calendar mutations & Reuse flow/path checks; scoped capabilities for dynamic tasks \\
\bottomrule
\end{tabular}
\end{adjustbox}
\end{table}

\subsection{Runtime Security Enforcement}
\label{sec:eval:agentdojo}

Table~\ref{tab:agentdojo} presents the frozen results across all four
AgentDojo suites.

\begin{table}[t]
\centering
\caption{Frozen AgentDojo results under a fixed frontier LLM configuration.
Utility = benign task completion rate on injected cases (higher is better).
Compromise = confirmed attack success rate (lower is better).}
\label{tab:agentdojo}
\small
\begin{tabular}{@{}l rr rr@{}}
\toprule
& \multicolumn{2}{c}{\textbf{Baseline}} & \multicolumn{2}{c}{\textbf{\agentflow{}}} \\
\cmidrule(lr){2-3} \cmidrule(lr){4-5}
\textbf{Suite} & Util.\% & Comp.\% & Util.\% & Comp.\% \\
\midrule
Banking   & 75.7 & 66.0 & 81.9 & 0.0 \\
Travel    & 44.3 & 25.7 & 60.7 & 0.0 \\
Workspace & 36.1 & 16.6 & 57.3 & 0.0 \\
Slack     & 66.7 & 84.8 & 73.3 & 0.0 \\
\midrule
\textbf{Total} & 46.7 & 33.0 & 63.3 & 0.0 \\
\bottomrule
\end{tabular}
\end{table}

\paragraph{Interpreting the compromise metric.}
AgentDojo's native security field can conflate blocked prerequisite calls with
attack failure, task failure, or unevaluable cases. We therefore classify each
trace as \emph{compromise\_confirmed}, \emph{blocked\_terminal},
\emph{benign\_failure\_or\_task\_drop}, or \emph{ambiguous}. The ``Comp.''
column reports only confirmed compromise, separating attacks that genuinely
succeed from attacks prevented by policy but possibly at utility cost.

In these 949 injected cases, \agentflow{} raises utility from 46.7\% to 63.3\% and
reduces confirmed compromise from 33.0\% to 0.0\%. Trace inspection suggests
that some policy blocks keep the agent from following injected distractions;
the remaining utility costs come from travel gates, Slack message/webpage
restrictions, banking money-movement controls, and Workspace external-email
controls. On count-aligned public Progent runs, \agentflow{} also exceeds
Progent on \textsc{Slack} (73.3\% utility and 0.0\% compromise vs.\ 55.2\% and
5.7\%) and \textsc{Banking} (81.9\% and 0.0\% vs.\ 72.2\% and 0.0\%).

\subsection{Mechanism Ablation}
\label{sec:eval:ablation}

To diagnose mechanism value, we rerun frozen 12-case diagnostic slices from
\textsc{Travel}, \textsc{Slack}, and \textsc{Banking}. Table~\ref{tab:ablation}
compares the undefended baseline (A0), flow-only enforcement (A2), flow plus
path constraints (A3), flow plus path and lineage tracking (A4), and the full
configuration with task-scoped capabilities and release contexts (A6). Early
structural defenses reduce compromise but are too conservative for utility;
A6 recovers benign task completion while retaining zero confirmed compromise
on these slices.

\begin{table}[t]
\centering
\caption{Main-paper AgentDojo ablation slices on the latest code. Utility is
benign task completion; compromise is confirmed attack success.}
\label{tab:ablation}
\small
\begin{tabular}{@{}l l rrrrr@{}}
\toprule
\textbf{Suite} & \textbf{Metric} & \textbf{A0} & \textbf{A2} & \textbf{A3} & \textbf{A4} & \textbf{A6} \\
\midrule
Travel  & Utility    & 4/12  & 3/12 & 3/12 & 3/12 & 11/12 \\
Travel  & Compromise & 7/12  & 0/12 & 0/12 & 0/12 & 0/12 \\
Slack   & Utility    & 11/12 & 3/12 & 3/12 & 3/12 & 11/12 \\
Slack   & Compromise & 11/12 & 3/12 & 4/12 & 4/12 & 0/12 \\
Banking & Utility    & 9/12  & 3/12 & 3/12 & 3/12 & 8/12 \\
Banking & Compromise & 12/12 & 0/12 & 0/12 & 0/12 & 0/12 \\
\bottomrule
\end{tabular}
\end{table}

A2--A4 prevent all confirmed compromise on \textsc{Travel} and
\textsc{Banking}, but often terminate benign workflows because they cannot
distinguish attacker-induced side effects from task-authorized effects. A6
adds scoped capabilities and release contexts, raising \textsc{Travel} utility
from 3/12 to 11/12 and \textsc{Banking} from 3/12 to 8/12. On \textsc{Slack},
the full policy closes channel, DM, webpage, and final-response sink gaps while
matching the baseline's 11/12 utility on the slice.

\subsection{Dynamic AgentDyn Stress Test}
\label{sec:eval:agentdyn}

To test generalization beyond AgentDojo's static tasks, we run the supported
\textsc{Dailylife} suite from AgentDyn~\cite{agentdyn}. This 200-case run
exposes a security-utility tradeoff: \agentflow{} reduces confirmed compromise
from 147/200 (73.5\%) to 0/200 (0.0\%), while utility changes from 89/200
(44.5\%) to 87/200 (43.5\%). The outcome taxonomy explains the tradeoff:
\agentflow{} blocks 319 tool calls across 143 cases, including 138 terminal
policy blocks, mostly from missing scoped capabilities for dynamic task
families such as inbox reads, web fetches, financial transfers, calendar
changes, and graph/file mutations. The result suggests the mechanisms transfer
to a more open-ended task distribution, while the first-pass
\textsc{Dailylife} policy still needs additional release contexts and
capability grants for legitimate dynamic-task families.

\subsection{Breadth and Replay Checks}
\label{sec:eval:breadth}

We also run breadth checks across ASB~\cite{asb}, InjecAgent~\cite{injecagent},
BIPIA~\cite{bipia}, AgentHarm~\cite{agentharm}, and MCPTox~\cite{mcptox}
(Table~\ref{tab:breadth-replays}). Except for the ASB upstream DPI run, these
are deterministic policy-gate replays, so the benign-side column is not clean
final-answer utility. The replays ask whether the policy layer blocks the
benchmark-specified unsafe tool, sink, or metadata transition independent of
model stochasticity.

\begin{table}[t]
\centering
\caption{Breadth checks across additional benchmarks. ``Blocked/prevented''
counts attacker or unsafe flows denied by \agentflow{}. For deterministic
replays, ``Benign side'' is the benign policy-gate pass count.}
\label{tab:breadth-replays}
\small
\begin{adjustbox}{max width=\columnwidth}
\begin{tabular}{@{}l l r r r@{}}
\toprule
\textbf{Benchmark} & \textbf{Mode} & \textbf{Cases} & \textbf{Blocked/prevented} & \textbf{Benign side} \\
\midrule
ASB task/tool matrix & deterministic gate & 2,142 & 2,040/2,040 & 102/102 \\
ASB upstream DPI & planner + gate & 1,200 & 1,200/1,200 & 2/1,200$^\dagger$ \\
InjecAgent & JSON case replay & 2,108 & 2,108/2,108 & 2,108/2,108 \\
BIPIA & response-sink replay & 41,250 & 41,250/41,250 & 41,250/41,250 \\
AgentHarm & tool-target replay & 416 & 208/208 & 208/208 \\
MCPTox & metadata-flow replay & 485 & 485/485 & 485/485 \\
\bottomrule
\multicolumn{5}{@{}l}{\footnotesize $^\dagger$ASB reports original-task success under attacked prompts, not benign policy-gate pass rate.}
\end{tabular}
\end{adjustbox}
\end{table}

These replay checks test policy-gate coverage, not end-to-end agent utility.
Within that scope, the policies block the modeled attacker flows across
tool-registry attacks, indirect prompt-injection tool calls, response-sink
flows, harmful tool-target sequences, and poisoned MCP metadata.

\subsection{SMT Verification Value and Performance}
\label{sec:eval:smt}

We evaluate whether the SMT verifier catches realistic policy-authoring
mistakes and how long the checked properties take to verify.

\noindent\textbf{Seeded verifier-bug study.}
We seed twelve plausible unsafe policy variants, including overbroad releases,
missing PII guards, credential escape/exfiltration, third-party sharing, trust
laundering, overpermission, and selected-field leaks. The verifier catches all
$12/12$ bugs: five structurally through the overbroad-release checker and seven
through semantic properties for PII leakage, credential escape, credential
exfiltration sequences, trust laundering, and third-party sharing. For the six
bugs with externally actionable traces, we also generate replayable runtime
witnesses showing the violation that would be reachable if verification were
bypassed.

\noindent\textbf{Verification performance.}
Table~\ref{tab:smt} reports verification times for seven safety properties,
each verified with bound $k=10$ over the verifier's bounded abstract traces.
All seven properties verify in under 0.5 seconds. The largest checked-in
AgentDojo/AgentDyn policy has 28 edges, 17 nodes, and 30 flow/path/release
rules; with $k=10$, the resulting quantifier-free formulas remain small enough
for sub-second solving. Thus policies can be checked before deployment for the
verified properties within the verifier's bounded abstraction.

\begin{table}[t]
\centering
\caption{SMT verification results. All properties verified (UNSAT) with $k{=}10$ using Z3~\cite{z3}. Times are wall-clock on a single core.}
\label{tab:smt}
\small
\begin{tabular}{@{}l c r@{}}
\toprule
\textbf{Property} & \textbf{Result} & \textbf{Time (s)} \\
\midrule
\textsc{non\_leakage\_pii}            & UNSAT & $<$0.5 \\
\textsc{no\_credential\_escape}       & UNSAT & $<$0.5 \\
\textsc{no\_privilege\_escalation}    & UNSAT & $<$0.5 \\
\textsc{no\_sensitive\_to\_third\_party} & UNSAT & $<$0.5 \\
\textsc{no\_credential\_exfil\_sequence} & UNSAT & $<$0.5 \\
\textsc{no\_trust\_laundering}        & UNSAT & $<$0.5 \\
\textsc{taint\_monotonicity}          & UNSAT & $<$0.5 \\
\bottomrule
\end{tabular}
\end{table}

\subsection{Runtime Overhead}
\label{sec:eval:overhead}

We measure the policy evaluation path (label lookup, rule matching, path
checking, and taint propagation) on the benchmark schema used by our scenario
tests. Average per-intercept latency is 6.4\,$\mu$s (P50: 6.0\,$\mu$s, P95:
8.7\,$\mu$s, P99: 9.7\,$\mu$s), negligible relative to LLM inference
($\sim$1--10\,seconds). Wrapper, logging, and scanner costs vary by
integration, but the core policy decision is not a bottleneck in our
measurements.

\subsection{Comparison with Related Defenses}
\label{sec:eval:comparison}

Table~\ref{tab:concurrent-comparison} compares \agentflow{} with the closest
related defenses. ``Partial'' denotes a related mechanism with a narrower or
different interface. The table compares policy-facing mechanisms, not overall
security strength. \agentflow{} differs mainly in exposing a standalone policy
layer over labeled edges and histories, together with bounded checks for the
supported policy fragment.

\begin{table*}[t]
\centering
\caption{Comparison with related agent defenses. The table compares
policy-facing mechanisms, not overall security strength. $\checkmark$ means the
feature is first-class in the paper or public artifact; Partial means the
system provides a related mechanism with a narrower or different interface.}
\label{tab:concurrent-comparison}
\scriptsize
\begin{tabular}{@{}lccccc@{}}
\toprule
\textbf{Feature} & \textbf{\agentflow{}} & \textbf{FIDES} & \textbf{CaMeL} & \textbf{FORGE} & \textbf{SAMOS} \\
\midrule
Runtime enforcement & $\checkmark$ & $\checkmark$ & $\checkmark$ & $\checkmark$ & $\checkmark$ \\
IFC/taint/provenance tracking & $\checkmark$ & $\checkmark$ & Partial & $\checkmark$ & $\checkmark$ \\
Standalone declarative policy layer & $\checkmark$ & -- & -- & $\checkmark$ & -- \\
Explicit path/history constraints & $\checkmark$ & -- & -- & Partial & -- \\
Task/capability scoping & $\checkmark$ & Partial & $\checkmark$ & Partial & Partial \\
Bounded SMT policy checks & $\checkmark$ & -- & -- & -- & -- \\
\bottomrule
\end{tabular}
\end{table*}
FIDES~\cite{fides} and CaMeL~\cite{camel} embed security into the agent
architecture. FORGE~\cite{pcas} is closest as a declarative policy system,
using Datalog over dependency graphs; \agentflow{} instead emphasizes labeled
flow/path rules and bounded SMT checks. SAMOS~\cite{samos} applies IFC ideas
to MCP workflows. These systems are not simply baselines for the same
interface: they make different assumptions about where enforcement lives. We
also ran a denominator-aligned reproduction of CaMeL's public \texttt{+camel}
artifact on our AgentDojo configuration; it completed all 949 cases with
60.1\% utility. \agentflow{} reaches 63.3\% utility in our frozen package. We
treat this as an artifact-level comparison, not as a head-to-head security
ranking.

\subsection{Limitations and Discussion}
\label{sec:eval:limitations}

\noindent\textbf{Overtainting and utility:}
The main limitation is overtainting in message-heavy workflows. When an LLM
processes a tainted tool result, the runtime may conservatively propagate that
taint because it cannot recover exact token-level dependencies. Per-task reset
helps when tasks have clear boundaries, but tasks that mix trusted and
untrusted content still require careful release rules. This is why we report
both utility and confirmed compromise.

\noindent\textbf{Task classes that require care:}
\agentflow{} enforces the configured policy; it does not infer which releases
the application owner intended. Tasks that intentionally forward untrusted web
content to an external sink, combine private records with attacker-controlled
instructions, or share sensitive fields with a third party require explicit
release contexts. Without those contexts, the runtime may block a benign task.
AgentDyn exposes this tradeoff: the first-pass \textsc{Dailylife} policy blocks
all confirmed compromises in our run, but missing scoped capabilities account
for many terminal denials.



\noindent\textbf{Guarantee scope:}
\agentflow{}'s SMT verifier gives a priori guarantees for the verifier's
bounded, structured abstraction, not for arbitrary LLM behavior outside the
modeled tool and response-sink interface. This scope is still useful in
practice: the verifier checks policy properties before deployment, while the
runtime monitor enforces those policies on mediated actions.

\section{Related Work}
\label{sec:related}
We focus on work that shapes \agentflow{}: attacks on tool-using agents,
agent defenses, authorization and IFC, and policy analysis.

\subsection{LLM Agent Security}

Prompt injection and tool misuse show that agent security cannot stop at input
filtering. Greshake et al.~\cite{greshake23} introduced indirect
prompt injection against LLM-integrated applications. AgentDojo~\cite{agentdojo}
and InjecAgent~\cite{injecagent} study the same failure mode in tool-using
agents. BIPIA focuses on final-response manipulation~\cite{bipia}; AgentDyn
adds dynamically generated tasks~\cite{agentdyn}; AgentHarm studies harmful
tool-use requests~\cite{agentharm}; ASB broadens the task and tool
surface~\cite{asb}; and MCPTox targets poisoned MCP metadata~\cite{mcptox}.
Other attacks enter through tool selection~\cite{toolhijacker, toolsword},
memory~\cite{agentpoison}, logs~\cite{logtoleak}, web
content~\cite{wasp, obliinjection}, or multi-agent
communication~\cite{promptinfection}. These attacks motivate tracking later
uses of untrusted or sensitive content, not only suspicious prompts.

Recent defenses protect different parts of the stack. Progent controls
tool privileges~\cite{progent}. SEAgent applies mandatory-access-control ideas
to delegation~\cite{seagent}. Conseca generates task-specific
policies~\cite{conseca}, while AgentArmor analyzes runtime
traces~\cite{agentarmor}. Other defenses rely on isolation or integrity
guardrails~\cite{promptflowintegrity}, guard agents~\cite{guardagent}, prompt
sanitization~\cite{promptarmor}, or permission prediction~\cite{wupermissions}.
\agentflow{} complements these systems by making lineage, labels, and path
constraints explicit policy objects.

The closest related systems are FORGE~\cite{pcas}, FIDES~\cite{fides},
CaMeL~\cite{camel}, and SAMOS~\cite{samos}. FORGE uses Datalog over dependency
graphs. FIDES embeds taint tracking in an agent planner. CaMeL separates
privileged planning from quarantined data processing. SAMOS places an IFC
gateway around MCP workflows. \agentflow{} uses a different interface: policies
over labeled runtime edges, path rules, task-scoped capabilities, and a bounded
SMT-checked fragment.

\subsection{Authorization and Information Flow}

Traditional authorization systems decide whether a principal may act on a
resource. Cedar~\cite{cedar, cedarbuilding}, XACML~\cite{xacml},
Zanzibar~\cite{zanzibar}, and Datalog-based authorization~\cite{datalog} are
examples of this request-level view. Zelkova uses SMT solving to analyze
authorization policies~\cite{zelkova}. \agentflow{} instead tracks how content
obtained in one step may influence a later agent action or sink.

\agentflow{} also draws on classical IFC. Denning introduced lattice-based
information flow~\cite{denning76}, and Myers and Liskov developed
decentralized labels~\cite{myersliskov}. Jif~\cite{jif}, Flume~\cite{flume},
Asbestos~\cite{asbestos}, and LIO~\cite{lio} enforce labels in conventional
programs. Work on declassification~\cite{declassification} and dynamic taint
analysis~\cite{taintcheck, schwartzsurvey} informs our propagation model. LLM
agents expose tool and model calls rather than transparent program
dependencies; \agentflow{} treats those calls as monitored data-flow steps and
adds path rules.

\subsection{Verification and Runtime Monitoring}

SMT and symbolic techniques are common in policy analysis. Our verifier uses
Z3~\cite{z3}. Margrave analyzes access-control
policies~\cite{margrave}; FIREMAN checks firewall rules~\cite{fireman}; and
cloud-policy verification finds misconfigurations before
deployment~\cite{blockpublicaccess}. \agentflow{} applies this style of
pre-deployment checking to bounded agent properties: label propagation, graph
reachability, and path patterns.

At runtime, \agentflow{} follows the reference-monitor tradition. Schneider
characterizes policies enforceable by execution monitors~\cite{schneider}.
Ligatti et al. study edit automata~\cite{editautomata, ligattiextend}, and
SASI demonstrates inline reference monitoring~\cite{sasi}. \agentflow{}
mediates agent tool calls and modeled response sinks, but reasons over labels,
provenance, capabilities, and paths rather than low-level system calls. Work on
multi-agent security~\cite{multiagentchallenges} and MCP
security~\cite{mcpsurvey} further motivates this placement: delegation and tool
registration introduce data-flow boundaries that ordinary tool permissions do
not capture.

\section{Conclusion}
\label{sec:conclusion}

We presented \agentflow{}, a policy language and runtime monitor for
controlling how data moves through tool-using LLM agents. By tracking labels,
provenance, and execution history across tool calls, \agentflow{} lets
operators express constraints that request-level checks miss, including
multi-hop exfiltration, trust laundering, unsafe delegation, and temporal
attack patterns. In the current prototype and evaluation, runtime enforcement
with bounded SMT checks blocks the policy-visible attacks examined in our
benchmarks while preserving useful task completion under the configured
policies. This report provides a concrete technical and formal foundation for
this design, along with preliminary evaluation evidence that motivates further
application-specific development and validation.

\bibliographystyle{IEEEtran}
\bibliography{references}

@inproceedings{injecagent,
  title={Injecagent: Benchmarking indirect prompt injections in tool-integrated large language model agents},
  author={Zhan, Qiusi and Liang, Zhixiang and Ying, Zifan and Kang, Daniel},
  booktitle={Findings of the Association for Computational Linguistics: ACL 2024},
  pages={10471--10506},
  year={2024}
}

@inproceedings{bipia,
  title={Benchmarking and defending against indirect prompt injection attacks on large language models},
  author={Yi, Jingwei and Xie, Yueqi and Zhu, Bin and Kiciman, Emre and Sun, Guangzhong and Xie, Xing and Wu, Fangzhao},
  booktitle={Proceedings of the 31st ACM SIGKDD Conference on Knowledge Discovery and Data Mining V. 1},
  pages={1809--1820},
  year={2025}
}

@article{agentdojo,
  title={Agentdojo: A dynamic environment to evaluate prompt injection attacks and defenses for llm agents},
  author={Debenedetti, Edoardo and Zhang, Jie and Balunovic, Mislav and Beurer-Kellner, Luca and Fischer, Marc and Tram{\`e}r, Florian},
  journal={Advances in Neural Information Processing Systems},
  volume={37},
  pages={82895--82920},
  year={2024}
}

@inproceedings{agentharm,
  title={Agentharm: A benchmark for measuring harmfulness of llm agents},
  author={Andriushchenko, Maksym and Souly, Alexandra and Dziemian, Mateusz and Duenas, Derek and Lin, Maxwell and Wang, Justin and Hendrycks, Dan and Zou, Andy and Kolter, Zico and Fredrikson, Matt and others},
  booktitle={International Conference on Learning Representations},
  volume={2025},
  pages={79185--79220},
  year={2025}
}

@inproceedings{asb,
  title={Agent security bench (asb): Formalizing and benchmarking attacks and defenses in llm-based agents},
  author={Zhang, Hanrong and Huang, Jingyuan and Mei, Kai and Yao, Yifei and Wang, Zhenting and Zhan, Chenlu and Wang, Hongwei and Zhang, Yongfeng},
  booktitle={International Conference on Learning Representations},
  volume={2025},
  pages={35331--35366},
  year={2025}
}

@article{toolhijacker,
  title={Prompt injection attack to tool selection in llm agents},
  author={Shi, Jiawen and Yuan, Zenghui and Tie, Guiyao and Zhou, Pan and Gong, Neil Zhenqiang and Sun, Lichao},
  journal={arXiv preprint arXiv:2504.19793},
  year={2025}
}

@article{agentpoison,
  title={Agentpoison: Red-teaming llm agents via poisoning memory or knowledge bases},
  author={Chen, Zhaorun and Xiang, Zhen and Xiao, Chaowei and Song, Dawn and Li, Bo},
  journal={Advances in Neural Information Processing Systems},
  volume={37},
  pages={130185--130213},
  year={2024}
}

@inproceedings{toolsword,
  title={Toolsword: Unveiling safety issues of large language models in tool learning across three stages},
  author={Ye, Junjie and Li, Sixian and Li, Guanyu and Huang, Caishuang and Gao, Songyang and Wu, Yilong and Zhang, Qi and Gui, Tao and Huang, Xuan-Jing},
  booktitle={Proceedings of the 62nd Annual Meeting of the Association for Computational Linguistics (Volume 1: Long Papers)},
  pages={2181--2211},
  year={2024}
}

@article{logtoleak,
  title={Log-To-Leak: Prompt Injection Attacks on Tool-Using LLM Agents via Model Context Protocol},
  author={Hu, Yuepeng and Fan, Chongyu and Samyoun, Sirat and Du, Jian},
  year={2026}
}

@inproceedings{mcptox,
  title={Mcptox: A benchmark for tool poisoning on real-world mcp servers},
  author={Wang, Zhiqiang and Gao, Yichao and Wang, Yanting and Liu, Suyuan and Sun, Haifeng and Cheng, Haoran and Shi, Guanquan and Du, Haohua and Li, Xiangyang},
  booktitle={Proceedings of the AAAI Conference on Artificial Intelligence},
  volume={40},
  number={42},
  pages={35811--35819},
  year={2026}
}

@inproceedings{greshake23,
  title={Not what you've signed up for: Compromising real-world llm-integrated applications with indirect prompt injection},
  author={Greshake, Kai and Abdelnabi, Sahar and Mishra, Shailesh and Endres, Christoph and Holz, Thorsten and Fritz, Mario},
  booktitle={Proceedings of the 16th ACM workshop on artificial intelligence and security},
  pages={79--90},
  year={2023}
}

@inproceedings{liupromptinjection,
  title={Formalizing and benchmarking prompt injection attacks and defenses},
  author={Liu, Yupei and Jia, Yuqi and Geng, Runpeng and Jia, Jinyuan and Gong, Neil Zhenqiang},
  booktitle={33rd USENIX Security Symposium (USENIX Security 24)},
  pages={1831--1847},
  year={2024}
}

@inproceedings{promptinfection,
  title={Prompt infection: Llm-to-llm prompt injection within multi-agent systems},
  author={Lee, Donghyun and Tiwari, Mo and Miranda, Brando},
  booktitle={European Symposium on Research in Computer Security},
  pages={511--520},
  year={2025},
  organization={Springer}
}

@article{wasp,
  title={Wasp: Benchmarking web agent security against prompt injection attacks},
  author={Evtimov, Ivan and Zharmagambetov, Arman and Grattafiori, Aaron and Guo, Chuan and Chaudhuri, Kamalika},
  journal={Advances in Neural Information Processing Systems},
  volume={38},
  year={2026}
}

@article{progent,
  title={Progent: Programmable privilege control for llm agents},
  author={Shi, Tianneng and He, Jingxuan and Wang, Zhun and Li, Hongwei and Wu, Linyu and Guo, Wenbo and Song, Dawn},
  journal={arXiv preprint arXiv:2504.11703},
  year={2025}
}

@article{promptflowintegrity,
  title={Prompt flow integrity to prevent privilege escalation in llm agents},
  author={Kim, Juhee and Choi, Woohyuk and Lee, Byoungyoung},
  journal={arXiv preprint arXiv:2503.15547},
  year={2025}
}

@inproceedings{conseca,
  title={Contextual agent security: A policy for every purpose},
  author={Tsai, Lillian and Bagdasarian, Eugene},
  booktitle={Proceedings of the 2025 Workshop on Hot Topics in Operating Systems},
  pages={8--17},
  year={2025}
}

@article{agentarmor,
  title={Agentarmor: Enforcing program analysis on agent runtime trace to defend against prompt injection},
  author={Wang, Peiran and Liu, Yang and Lu, Yunfei and Cai, Yifeng and Chen, Hongbo and Yang, Qingyou and Zhang, Jie and Hong, Jue and Wu, Ye},
  journal={arXiv preprint arXiv:2508.01249},
  year={2025}
}

@article{guardagent,
  title={Guardagent: Safeguard llm agents by a guard agent via knowledge-enabled reasoning},
  author={Xiang, Zhen and Zheng, Linzhi and Li, Yanjie and Hong, Junyuan and Li, Qinbin and Xie, Han and Zhang, Jiawei and Xiong, Zidi and Xie, Chulin and Yang, Carl and others},
  journal={arXiv preprint arXiv:2406.09187},
  year={2024}
}

@article{promptarmor,
  title={Promptarmor: Simple yet effective prompt injection defenses},
  author={Shi, Tianneng and Zhu, Kaijie and Wang, Zhun and Jia, Yuqi and Cai, Will and Liang, Weida and Wang, Haonan and Alzahrani, Hend and Lu, Joshua and Kawaguchi, Kenji and others},
  journal={arXiv preprint arXiv:2507.15219},
  year={2025}
}

@article{seagent,
  title={Taming Various Privilege Escalation in LLM-Based Agent Systems: A Mandatory Access Control Framework},
  author={Ji, Zimo and Wu, Daoyuan and Jiang, Wenyuan and Ma, Pingchuan and Li, Zongjie and Gao, Yudong and Wang, Shuai and Li, Yingjiu},
  journal={arXiv preprint arXiv:2601.11893},
  year={2026}
}

@article{wupermissions,
  title={Towards automating data access permissions in ai agents},
  author={Wu, Yuhao and Yang, Ke and Roesner, Franziska and Kohno, Tadayoshi and Zhang, Ning and Iqbal, Umar},
  journal={arXiv preprint arXiv:2511.17959},
  year={2025}
}

@article{fides,
  title={Securing ai agents with information-flow control},
  author={Costa, Manuel and K{\"o}pf, Boris and Kolluri, Aashish and Paverd, Andrew and Russinovich, Mark and Salem, Ahmed and Tople, Shruti and Wutschitz, Lukas and Zanella-B{\'e}guelin, Santiago},
  journal={arXiv preprint arXiv:2505.23643},
  year={2025}
}

@article{camel,
  title={Defeating prompt injections by design},
  author={Debenedetti, Edoardo and Shumailov, Ilia and Fan, Tianqi and Hayes, Jamie and Carlini, Nicholas and Fabian, Daniel and Kern, Christoph and Shi, Chongyang and Terzis, Andreas and Tram{\`e}r, Florian},
  journal={arXiv preprint arXiv:2503.18813},
  year={2025}
}

@inproceedings{samos,
  title={Securing mcp-based agent workflows},
  author={Ntousakis, Grigoris and Stephen, Julian James and Le, Michael V and Chukkapalli, Sai Sree Laya and Taylor, Teryl and Molloy, Ian M and Araujo, Frederico},
  booktitle={Proceedings of the 4th Workshop on Practical Adoption Challenges of ML for Systems},
  pages={50--55},
  year={2025}
}

@article{pcas,
  title={Formal Policy Enforcement for Real-World Agentic Systems},
  author={Palumbo, Nils and Choudhary, Sarthak and Choi, Jihye and Amir, Guy and Chalasani, Prasad and Jha, Somesh},
  journal={arXiv preprint arXiv:2602.16708},
  year={2026}
}

@article{agentdyn,
  title={AgentDyn: A Dynamic Open-Ended Benchmark for Evaluating Prompt Injection Attacks of Real-World Agent Security System},
  author={Li, Hao and Wen, Ruoyao and Shi, Shanghao and Zhang, Ning and Xiao, Chaowei},
  journal={arXiv preprint arXiv:2602.03117},
  year={2026}
}

@article{cedar,
  title={Cedar: A new language for expressive, fast, safe, and analyzable authorization},
  author={Cutler, Joseph W and Disselkoen, Craig and Eline, Aaron and He, Shaobo and Headley, Kyle and Hicks, Michael and Hietala, Kesha and Ioannidis, Eleftherios and Kastner, John and Mamat, Anwar and others},
  journal={Proceedings of the ACM on Programming Languages},
  volume={8},
  number={OOPSLA1},
  pages={670--697},
  year={2024},
  publisher={ACM New York, NY, USA}
}

@inproceedings{cedarbuilding,
  title={How we built cedar: A verification-guided approach},
  author={Disselkoen, Craig and Eline, Aaron and He, Shaobo and Headley, Kyle and Hicks, Michael and Hietala, Kesha and Kastner, John and Mamat, Anwar and McCutchen, Matt and Rungta, Neha and others},
  booktitle={Companion Proceedings of the 32nd ACM International Conference on the Foundations of Software Engineering},
  pages={351--357},
  year={2024}
}

@article{xacml,
  title={extensible access control markup language (xacml) version 3.0},
  author={Standard, OASIS},
  journal={A:(22 January 2013). URl: http://docs. oasis-open. org/xacml/3.0/xacml-3.0-core-spec-os-en. html},
  year={2013}
}

@inproceedings{zanzibar,
  title={Zanzibar:$\{$Google’s$\}$ Consistent, Global Authorization System},
  author={Pang, Ruoming and Caceres, Ramon and Burrows, Mike and Chen, Zhifeng and Dave, Pratik and Germer, Nathan and Golynski, Alexander and Graney, Kevin and Kang, Nina and Kissner, Lea and others},
  booktitle={2019 USENIX Annual Technical Conference (USENIX ATC 19)},
  pages={33--46},
  year={2019}
}

@inproceedings{zelkova,
  title={Semantic-based automated reasoning for AWS access policies using SMT},
  author={Backes, John and Bolignano, Pauline and Cook, Byron and Dodge, Catherine and Gacek, Andrew and Luckow, Kasper and Rungta, Neha and Tkachuk, Oksana and Varming, Carsten},
  booktitle={2018 Formal Methods in Computer Aided Design (FMCAD)},
  pages={1--9},
  year={2018},
  organization={IEEE}
}

@inproceedings{datalog,
  title={Datalog with constraints: A foundation for trust management languages},
  author={Li, Ninghui and Mitchell, John C},
  booktitle={International Symposium on Practical Aspects of Declarative Languages},
  pages={58--73},
  year={2002},
  organization={Springer}
}

@article{lampson,
  author    = {Lampson, Butler W.},
  title     = {Protection},
  journal   = {ACM SIGOPS Operating Systems Review},
  volume    = {8},
  number    = {1},
  pages     = {18--24},
  year      = {1974},
  publisher = {ACM New York, NY, USA}
}

@article{denning76,
  title={A lattice model of secure information flow},
  author={Denning, Dorothy E},
  journal={Communications of the ACM},
  volume={19},
  number={5},
  pages={236--243},
  year={1976},
  publisher={ACM New York, NY, USA}
}

@article{sabelfeldsurvey,
  title={Language-based information-flow security},
  author={Sabelfeld, Andrei and Myers, Andrew C},
  journal={IEEE Journal on selected areas in communications},
  volume={21},
  number={1},
  pages={5--19},
  year={2003},
  publisher={IEEE}
}

@article{myersliskov,
  title={Protecting privacy using the decentralized label model},
  author={Myers, Andrew C and Liskov, Barbara},
  journal={ACM Transactions on Software Engineering and Methodology (TOSEM)},
  volume={9},
  number={4},
  pages={410--442},
  year={2000},
  publisher={ACM New York, NY, USA}
}

@inproceedings{jif,
  title={JFlow: Practical mostly-static information flow control},
  author={Myers, Andrew C},
  booktitle={Proceedings of the 26th ACM SIGPLAN-SIGACT symposium on Principles of programming languages},
  pages={228--241},
  year={1999}
}

@inproceedings{lio,
  title={Flexible dynamic information flow control in Haskell},
  author={Stefan, Deian and Russo, Alejandro and Mitchell, John C and Mazi{\`e}res, David},
  booktitle={Proceedings of the 4th ACM Symposium on Haskell},
  pages={95--106},
  year={2011}
}

@article{flume,
  title={Information flow control for standard OS abstractions},
  author={Krohn, Maxwell and Yip, Alexander and Brodsky, Micah and Cliffer, Natan and Kaashoek, M Frans and Kohler, Eddie and Morris, Robert},
  journal={ACM SIGOPS Operating Systems Review},
  volume={41},
  number={6},
  pages={321--334},
  year={2007},
  publisher={ACM New York, NY, USA}
}

@article{asbestos,
  title={Labels and event processes in the Asbestos operating system},
  author={Efstathopoulos, Petros and Krohn, Maxwell and VanDeBogart, Steve and Frey, Cliff and Ziegler, David and Kohler, Eddie and Mazieres, David and Kaashoek, Frans and Morris, Robert},
  journal={ACM SIGOPS Operating Systems Review},
  volume={39},
  number={5},
  pages={17--30},
  year={2005},
  publisher={ACM New York, NY, USA}
}

@article{declassification,
  title={Declassification: Dimensions and principles},
  author={Sabelfeld, Andrei and Sands, David},
  journal={Journal of Computer Security},
  volume={17},
  number={5},
  pages={517--548},
  year={2009},
  publisher={SAGE Publications Sage UK: London, England}
}

@inproceedings{taintcheck,
  title={Dynamic taint analysis for automatic detection, analysis, and signaturegeneration of exploits on commodity software.},
  author={Newsome, James and Song, Dawn Xiaodong and others},
  booktitle={NDSS},
  volume={5},
  pages={3--4},
  year={2005}
}

@inproceedings{schwartzsurvey,
  title={All you ever wanted to know about dynamic taint analysis and forward symbolic execution (but might have been afraid to ask)},
  author={Schwartz, Edward J and Avgerinos, Thanassis and Brumley, David},
  booktitle={2010 IEEE symposium on Security and privacy},
  pages={317--331},
  year={2010},
  organization={IEEE}
}

@inproceedings{z3,
  title={Z3: An efficient SMT solver},
  author={De Moura, Leonardo and Bj{\o}rner, Nikolaj},
  booktitle={International conference on Tools and Algorithms for the Construction and Analysis of Systems},
  pages={337--340},
  year={2008},
  organization={Springer}
}

@inproceedings{margrave,
  title={Verification and change-impact analysis of access-control policies},
  author={Fisler, Kathi and Krishnamurthi, Shriram and Meyerovich, Leo A and Tschantz, Michael Carl},
  booktitle={Proceedings of the 27th international conference on Software engineering},
  pages={196--205},
  year={2005}
}

@inproceedings{fireman,
  title={Fireman: A toolkit for firewall modeling and analysis},
  author={Yuan, Lihua and Chen, Hao and Mai, Jianning and Chuah, Chen-Nee and Su, Zhendong and Mohapatra, Prasant},
  booktitle={2006 IEEE Symposium on Security and Privacy (S\&P'06)},
  pages={15--pp},
  year={2006},
  organization={IEEE}
}

@inproceedings{blockpublicaccess,
  title={Block public access: trust safety verification of access control policies},
  author={Bouchet, Malik and Cook, Byron and Cutler, Bryant and Druzkina, Anna and Gacek, Andrew and Hadarean, Liana and Jhala, Ranjit and Marshall, Brad and Peebles, Dan and Rungta, Neha and others},
  booktitle={Proceedings of the 28th ACM Joint Meeting on European Software Engineering Conference and Symposium on the Foundations of Software Engineering},
  pages={281--291},
  year={2020}
}

@article{react,
  title={React: Synergizing reasoning and acting in language models},
  author={Yao, Shunyu and Zhao, Jeffrey and Yu, Dian and Du, Nan and Shafran, Izhak and Narasimhan, Karthik and Cao, Yuan},
  journal={arXiv preprint arXiv:2210.03629},
  year={2022}
}

@article{toolformer,
  title={Toolformer: Language models can teach themselves to use tools},
  author={Schick, Timo and Dwivedi-Yu, Jane and Dess{\`\i}, Roberto and Raileanu, Roberta and Lomeli, Maria and Hambro, Eric and Zettlemoyer, Luke and Cancedda, Nicola and Scialom, Thomas},
  journal={Advances in neural information processing systems},
  volume={36},
  pages={68539--68551},
  year={2023}
}

@article{multiagentchallenges,
  title={Open challenges in multi-agent security: Towards secure systems of interacting ai agents},
  author={de Witt, Christian Schroeder and Krawiecka, Klaudia and Krawczuk, Igor and Hagag, Ben and Anderson, William L and Belcak, Peter and Bucknall, Ben and Cai, Xiaohong and Chopra, Ayush and Cohen, Doron and others},
  journal={arXiv preprint arXiv:2505.02077},
  year={2025}
}

@article{mcpsurvey,
  title={Model context protocol (mcp): Landscape, security threats, and future research directions},
  author={Hou, Xinyi and Zhao, Yanjie and Wang, Shenao and Wang, Haoyu},
  journal={ACM Transactions on Software Engineering and Methodology},
  year={2025},
  publisher={ACM New York, NY}
}

@article{surveyrise,
  title={The rise and potential of large language model based agents: A survey},
  author={Xi, Zhiheng and Chen, Wenxiang and Guo, Xin and He, Wei and Ding, Yiwen and Hong, Boyang and Zhang, Ming and Wang, Junzhe and Jin, Senjie and Zhou, Enyu and others},
  journal={Science China Information Sciences},
  volume={68},
  number={2},
  pages={121101},
  year={2025},
  publisher={Springer}
}

@article{schneider,
  title={Enforceable security policies},
  author={Schneider, Fred B},
  journal={ACM Transactions on Information and System Security (TISSEC)},
  volume={3},
  number={1},
  pages={30--50},
  year={2000},
  publisher={ACM New York, NY, USA}
}

@article{editautomata,
  title={Edit automata: Enforcement mechanisms for run-time security policies},
  author={Ligatti, Jay and Bauer, Lujo and Walker, David},
  journal={International Journal of Information Security},
  volume={4},
  number={1},
  pages={2--16},
  year={2005},
  publisher={Springer}
}

@inproceedings{sasi,
  title={SASI enforcement of security policies: A retrospective},
  author={Erlingsson, Ulfar and Schneider, Fred B},
  booktitle={Proceedings of the 1999 workshop on New security paradigms},
  pages={87--95},
  year={1999}
}

@article{ligattiextend,
  title={Run-time enforcement of nonsafety policies},
  author={Ligatti, Jay and Bauer, Lujo and Walker, David},
  journal={ACM Transactions on Information and System Security (TISSEC)},
  volume={12},
  number={3},
  pages={1--41},
  year={2009},
  publisher={ACM New York, NY, USA}
}

@inproceedings{surveytrustworthyagents,
  title={A survey on trustworthy llm agents: Threats and countermeasures},
  author={Yu, Miao and Meng, Fanci and Zhou, Xinyun and Wang, Shilong and Mao, Junyuan and Pan, Linsey and Chen, Tianlong and Wang, Kun and Li, Xinfeng and Zhang, Yongfeng and others},
  booktitle={Proceedings of the 31st ACM SIGKDD Conference on Knowledge Discovery and Data Mining V. 2},
  pages={6216--6226},
  year={2025}
}

@article{surveycomprehensive,
  title={Security of LLM-based Agents Regarding Attacks, Defenses, and Applications: A Comprehensive Survey},
  author={Tang, Yaxin and Liu, Yijia and Lan, Jiahe and Yan, Zheng and Gelenbe, Erol},
  journal={Information Fusion},
  pages={103941},
  year={2025},
  publisher={Elsevier}
}

@article{surveypromptinjection,
  title={Prompt injection attacks on large language models: A survey of attack methods, root causes, and defense strategies},
  author={Geng, Tongcheng and Xu, Zhiyuan and Qu, Yubin and Wong, W Eric},
  journal={Computers, Materials, \& Continua},
  volume={87},
  number={1},
  year={2026},
  publisher={Tech Science Press}
}

@article{surveyattackdefense,
  title={The attack and defense landscape of agentic ai: A comprehensive survey},
  author={Kim, Juhee and Liu, Xiaoyuan and Wang, Zhun and Qiu, Shi and Li, Bo and Guo, Wenbo and Song, Dawn},
  journal={arXiv preprint arXiv:2603.11088},
  year={2026}
}

@article{obliinjection,
  title={ObliInjection: Order-Oblivious Prompt Injection Attack to LLM Agents with Multi-source Data},
  author={Wang, Reachal and Jia, Yuqi and Gong, Neil Zhenqiang},
  journal={arXiv preprint arXiv:2512.09321},
  year={2025}
}

\appendix
\section*{Proofs for the security properties}
We now state the security properties guaranteed by the runtime semantics. We show that the runtime semantics enforces the security goals
defined in Section~\ref{sec:threat}. Intuitively, a trace is
valid if it is obtained by starting from an initial runtime state and
repeatedly applying runtime transitions. Only transitions that return
$\mathit{allow}$ append a step to the trace; transitions that return
$\mathit{deny}$ or $\mathit{pause}$ leave the trace unchanged. We start by defining some properties related to the trace.

\begin{definition}[Reachable Runtime State]
A runtime state ${st_r}^k$ is reachable under schema $\Gamma$ and policy $P$
if it is obtained from an initial runtime state $st_r^0$ by a finite
sequence of runtime transitions:
\[
\Gamma,P \vdash
\langle st_r^0,req_1\rangle \rightarrow
\ldots
\langle st_r^k,D_k\rangle .
\]
If $st_r^k=\langle M,\tau,Lin,ts,Req,Cap\rangle$, then $\tau$ is a
valid trace under $\Gamma$ and $P$.
\end{definition}

\setcounter{theorem}{0}
\begin{theorem}[Runtime Preservation]
Assume $\Gamma \vdash P$ and $\Gamma \vdash st_r$. If
$
\Gamma,P \vdash
\langle st_r,req\rangle
\rightarrow
\langle st_r',D\rangle,
$
then
$
\Gamma \vdash st_r' .
$
\end{theorem}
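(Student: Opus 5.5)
The plan is a case analysis on the last rule used to derive $\Gamma,P \vdash \langle st_r,req\rangle \rightarrow \langle st_r',D\rangle$. Only three rules can conclude a runtime transition: \textsc{RTA}, \textsc{RTD}, and \textsc{RTP}.

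The \textsc{RTD} and \textsc{RTP} cases are immediate. Both rules force $st_r' = st_r$, so $\Gamma \vdash st_r'$ is just the hypothesis $\Gamma \vdash st_r$.

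The \textsc{RTA} case carries all the content. Here
\[
st_r' = \langle M[d \mapsto \mathcal{L}_{out}],\ \tau\cdot\sigma,\ Lin[d \mapsto \overline{d_p}],\ ts,\ Req,\ Cap\rangle .
\]
Since $ts$, $Req$ and $Cap$ are unchanged, their conjuncts of the runtime-state well-formedness judgment carry over from $\Gamma \vdash st_r$. The remaining work is to re-establish the conjuncts for $M$, $\tau$ and $Lin$. I expect these to be:
\begin{enumerate}
\item every label in $M$ is a well-formed label over $\Gamma$;
\item every step in $\tau$ mentions declared nodes, edges and agents, and carries well-formed labels;
\item $\mathrm{dom}(Lin) \subseteq \mathrm{dom}(M)$, every parent identifier is in $\mathrm{dom}(M)$, and the lineage relation is acyclic (parents allocated before children).
\end{enumerate}

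To re-establish these, I would first prove two auxiliary lemmas.

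\emph{Label closure.} If $\mathcal{L}_{in}$ is well-formed and $\rho \vdash \mathcal{L}_{in} \Downarrow \mathcal{L}_{out}$ for an edge $\rho$ well-typed under $\Gamma$, then $\mathcal{L}_{out}$ is well-formed. The proof is by induction on the propagation derivation, with cases \textsc{P-Inherit}, \textsc{P-Const}, \textsc{P-Opaque}, and \textsc{P-Transform}. The \textsc{P-Transform} case is an inner induction over the rule sequence (\textsc{R-RemoveCat}, \textsc{R-Downgrade}, \textsc{R-Set}). These hold because sensitivities stay in the finite chain and category removal yields a subset of declared categories. The same lemma covers input resolution: \textsc{IL-Source} and \textsc{IL-Field} read declared node and field labels, and \textsc{IL-Lineage} joins labels that are in $M$ by the hypothesis. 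Joins of well-formed labels are well-formed.

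\emph{Candidate-step well-formedness.} If $st_r, req \Downarrow \sigma$ by \textsc{CS-Build}, then $\sigma$'s nodes, edge and agent are declared in $\Gamma$. This follows from $\Gamma \vdash P$ and the fail-closed assumption that requests mention only schema entities; I will state that as part of request well-formedness if the appendix judgment does not already include it.

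With these lemmas, the taint-map and trace conjuncts follow directly: the new entry $\mathcal{L}_{out}$ is well-formed, and $\tau\cdot\sigma$ appends a well-formed step.

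The main obstacle is the lineage conjunct. The lineage relation can be shown to stay acyclic and closed only if the output identifier $d$ recorded in $\sigma$ is fresh, i.e.\ $d \notin \mathrm{dom}(M)$, and every $d_i \in \overline{d_p}$ already lies in $\mathrm{dom}(M)$. The paper asserts that each allowed step ``allocates a fresh output identifier whose parents are earlier identifiers,'' but \textsc{CS-Build} reuses the request's $d$ and leaves $\overline{d_p}$ implicit. So I would first make freshness an explicit side condition of \textsc{CS-Build}, or of the runtime-state judgment via an allocation counter, and take $\overline{d_p}$ to be the identifiers consulted by \textsc{IL-Lineage}, which are in $\mathrm{dom}(M)$. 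Given that, any cycle through the new edge would require $d$ to be an ancestor of some $d_i$. That is impossible, since $d$ does not appear in the old $Lin$ or $M$, and the old lineage relation was acyclic by hypothesis. Closure holds because $\mathrm{dom}$ of both maps grows by exactly $d$.
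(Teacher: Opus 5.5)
Your proposal follows the paper's proof: the same split on \textsc{RTA}/\textsc{RTD}/\textsc{RTP}, with the \textsc{RTA} case reduced to well-formedness of $\mathcal{L}_{out}$, of $\sigma$, and of the updated maps. The paper merely asserts these facts and never defines $\Gamma \vdash st_r$, so your closure and candidate-step lemmas and your request-well-formedness assumption make explicit what its proof tacitly uses, while the freshness/acyclicity obligation arises only because you chose a stronger lineage invariant than the paper's proof needs. If you keep that invariant, put the freshness condition on a new output identifier distinct from the request's input $d$, which means changing \textsc{CS-Build} and \textsc{RTA} to record that identifier rather than freshening $d$ itself: under $\mathrm{dom}(Lin)\subseteq\mathrm{dom}(M)$ a fresh $d$ has no $Lin$ entry, so \textsc{IL-Lineage} could never fire and lineage-based taint would vanish.
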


\begin{proof}
By case analysis on the runtime transition rule.

\begin{itemize}
\item {\textsc{RTA} case:}
The transition is an allowed execution. By rule \textsc{RTA}, there exists
a candidate step
\[
\sigma =
\langle n^s,e,n^d,\mathcal{L}_{in},\mathcal{L}_{out},
a,d,\overline{d_p}\rangle
\]
such that
$
st_r,req \Downarrow \sigma$ and
$P,st_r \vdash \sigma \Downarrow \mathit{allow}.$
The updated runtime state is
$
st_r' =
\langle
M[d \mapsto \mathcal{L}_{out}],
\tau \cdot \sigma,
Lin[d \mapsto \overline{d_p}],
ts,
Req,
Cap
\rangle .
$

Since $\Gamma \vdash st_r$, the taint map $M$, trace $\tau$, lineage map
$Lin$, task scope $ts$, requirements $Req$, and capabilities $Cap$ are
well-formed under $\Gamma$. By candidate-step construction,
$\mathcal{L}_{out}$ is produced by label propagation from a well-formed
input label; therefore $\Gamma \vdash \mathcal{L}_{out}$. The step
$\sigma$ contains declared nodes, a declared edge, a declared agent, and
well-formed labels, so $\Gamma \vdash \sigma$.

Updating $M$ with a well-formed label, appending a well-formed step to a
well-formed trace, and updating $Lin$ with parent identifiers from the
candidate step preserve well-formedness. All other components of the
runtime state are unchanged. Hence $
\Gamma \vdash st_r'.$

\item {\textsc{RTD} case:}
The transition is denied. By rule \textsc{RTD}, the runtime state is
unchanged:$
st_r'=st_r.
$
Since $\Gamma \vdash st_r$ by assumption, we immediately have
$
\Gamma \vdash st_r'.$

\item {\textsc{RTP} case:}
The transition is paused. By rule \textsc{RTP}, the runtime state is
unchanged:
$
st_r'=st_r.$
Since $\Gamma \vdash st_r$ by assumption, we immediately have $
\Gamma \vdash st_r'.$
\end{itemize}

Therefore every runtime transition preserves well-formedness of the
runtime state.
\end{proof}

\begin{theorem}[Policy Enforcement Soundness]\label{thm2}
Let $st_r=\langle M,\tau,Lin,ts,Req,Cap\rangle$ be reachable under
$\Gamma$ and $P$. For every step $\sigma \in \tau$, $\sigma$ was appended
only after policy evaluation returned $\mathit{allow}$. Therefore,
$\sigma$ violates no matching deny-flow clause, satisfies all runtime requirements, and does not make the trace violate any forbid path
rule in $P$.
\end{theorem}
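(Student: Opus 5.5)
I would prove this by induction on the length $k$ of the transition sequence that reaches $st_r$, strengthening the claim to an invariant on traces. The invariant says that every step $\sigma_j$ in the trace can be attributed to some transition $i\le k$, and that transition was an instance of \textsc{RTA} whose premise $P, st_r^{i-1}\vdash\sigma_j\Downarrow\mathit{allow}$ was derived by \textsc{PEA}. Moreover, the trace of $st_r^{i-1}$ is exactly the prefix $\tau_{<j}=\sigma_1\cdots\sigma_{j-1}$.

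\textbf{Base case.} The initial state $st_r^0$ has the empty trace, so the claim holds vacuously. (I assume, as the definition of reachable states implicitly requires, that initial states carry $\tau=\epsilon$.)

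\textbf{Inductive step.} Consider the last transition $\Gamma,P\vdash\langle st_r^{k-1},req_k\rangle\rightarrow\langle st_r^k,D_k\rangle$ and split on the rule used.
\begin{itemize}
\item For \textsc{RTD} and \textsc{RTP}, the state is unchanged. Hence $\tau^k=\tau^{k-1}$, and the invariant transfers directly from the induction hypothesis.
\item For \textsc{RTA}, $\tau^k=\tau^{k-1}\cdot\sigma$. Every old step is covered by the hypothesis. The new step $\sigma$ is appended by exactly this transition, whose premise is $P,st_r^{k-1}\vdash\sigma\Downarrow\mathit{allow}$, and the trace before it is $\tau^{k-1}$.
\end{itemize}
This shows the first sentence of the theorem: no step enters $\tau$ except through \textsc{RTA}. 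The other rules never modify $\tau$, and \textsc{RTA} appends only allowed candidates.

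For the ``therefore'' part, I would invert the derivation of $\sigma\Downarrow\mathit{allow}$. The only rule concluding $\mathit{allow}$ is \textsc{PEA}, so its premises hold for $\sigma$ with respect to the trace $\tau_{<j}$ at that time. Reading them off:
\begin{itemize}
\item \textbf{Deny-flow clauses.} $\forall(deny,fname):cl.\;cl\not\models\sigma$, so $\sigma$ matches no deny-flow clause.
\item \textbf{Forbid path rules.} $\forall$ forbid path rules, $\tau_{<j}\cdot\sigma\not\models\pi_e$. Thus appending $\sigma$ did not make the trace match any forbidden pattern.
\end{itemize}

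\textbf{Main obstacle: runtime requirements.} \textsc{PEA} as written does not list requirement predicates among its premises, yet the prose says an allowed step requires all $\phi_{req}$ to hold. I would handle this in one of two ways. The first option is to read \textsc{PEA} as including the side condition $\forall\phi_{req}.\,\phi_{req}(st_r)=\mathsf{true}$, as stated in the itemized description preceding it. The second option is to argue that the three evaluation rules are meant to be mutually exclusive: if some $\phi_{req}(st_r)=\mathsf{false}$, then \textsc{PEP} applies and evaluation yields $\mathit{pause}$. Under the intended determinism of $P,st_r\vdash\sigma\Downarrow D$, that outcome is incompatible with $\mathit{allow}$. I would state the determinism assumption explicitly and, if needed, add it to \textsc{PEA}.

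\textbf{Scope of the conclusion.} The forbid-path guarantee is about $\tau_{<j}\cdot\sigma$, the prefix ending at $\sigma$, not about later extensions of the trace. The theorem statement is phrased accordingly, saying $\sigma$ ``does not make the trace violate'' a forbid rule, so this needs only to be noted, not proved further.
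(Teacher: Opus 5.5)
Your proposal is correct and follows the paper's argument: every step of a reachable trace entered via \textsc{RTA}, and inverting \textsc{PEA}, the only rule concluding $\mathit{allow}$, yields the deny-flow and forbid-path conclusions. Inducting on the number of transitions rather than on trace length is cleaner, since it disposes of \textsc{RTD}/\textsc{RTP} explicitly and makes the prefix $\tau'$ reachable by construction. You are also right that the ``runtime requirements'' conclusion does not follow from \textsc{PEA} as written, which the paper's proof simply asserts; only your first fix (adding $\forall\phi_{req}.\,\phi_{req}(st_r)=\mathsf{true}$ to \textsc{PEA}) is sound, because as the rules stand \textsc{PEA} and \textsc{PEP} can both fire on the same step, so determinism cannot simply be assumed.
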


\begin{proof}
By induction on the length of the trace $\tau$.

\begin{itemize}
\item Base case: $\tau=\epsilon$ means the trace contains no steps; hence proved.

\item Inductive case:
Suppose $\tau = \tau' \cdot \sigma$ for some previous trace $\tau'$ and
last step $\sigma$. By the induction hypothesis, every step in $\tau'$
satisfies the property. It remains to show that the newly appended step $\sigma$ also satisfies
the property. Since $\sigma$ appears as the last step of a reachable
trace, it must have been appended by an allowed runtime transition.
Therefore, by rule \textsc{RTA}, there exists a previous runtime state
\[
st_r'=\langle M',\tau',Lin',ts',Req',Cap'\rangle
\]
and a request $req$ such that
\[
st_r',req \Downarrow \sigma
\qquad\text{and}\qquad
P,st_r' \vdash \sigma \Downarrow \mathit{allow}.
\]

By rule \textsc{PEA}, policy evaluation can return
$\mathit{allow}$ only if $\sigma$ matches no deny-flow clause, satisfies
the relevant runtime requirements, and does not make the extended trace
$\tau' \cdot \sigma$ violate any forbid path rule in $P$.

Thus $\sigma$ satisfies the desired property. Since all steps in $\tau'$
satisfy the property by the induction hypothesis, all steps in
$\tau=\tau'\cdot\sigma$ satisfy the property. Therefore the theorem holds.
\end{itemize}
\end{proof}

\begin{theorem}[G1: Flow Policy Enforcement]
Let $st_r=\langle M,\tau,Lin,ts,Req,Cap\rangle$ be reachable under
$\Gamma$ and $P$. For every step $\sigma \in \tau$, the flow represented
by $\sigma$ satisfies all applicable flow and path constraints in $P$.
\end{theorem}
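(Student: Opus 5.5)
The plan is to derive G1 as a corollary of Policy Enforcement Soundness (Theorem~\ref{thm2}), strengthened so that it also covers the allow-side and require-side premises of rule \textsc{PEA}. I would first fix what ``applicable flow and path constraints'' means: for a step $\sigma$ committed after prefix $\tau'$, the applicable constraints are
\begin{itemize}
\item every deny-flow clause $cl$ with $cl \models \sigma$;
\item the default-deny requirement that some allow clause matches $\sigma$;
\item every forbid path rule and every require path rule of $P$, evaluated on $\tau'\cdot\sigma$.
\end{itemize}
Path rules are thus evaluated on the prefix ending at $\sigma$, since that is the trace \textsc{PEA} inspects. With this reading, G1 is exactly the conjunction of the \textsc{PEA} premises for each committed step.

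The argument is by induction on the length of $\tau$, mirroring Theorem~\ref{thm2}. The base case $\tau=\epsilon$ is vacuous. In the inductive case $\tau=\tau'\cdot\sigma$, I would show that $\sigma$ was appended by an \textsc{RTA} transition from some reachable state with trace $\tau'$. \textsc{RTD} and \textsc{RTP} leave the state unchanged, and \textsc{RTA} is the only rule that extends $\tau$. By inverting \textsc{RTA} we get $P,st_r'\vdash\sigma\Downarrow\mathit{allow}$. The only rule concluding $\mathit{allow}$ is \textsc{PEA}, so inverting it yields all four premises:
\begin{itemize}
\item an allow clause matches $\sigma$;
\item no deny clause matches $\sigma$;
\item $\tau'\cdot\sigma\models\pi_e$ for every require rule;
\item $\tau'\cdot\sigma\not\models\pi_e$ for every forbid rule.
\end{itemize}
These are exactly the applicable constraints for $\sigma$. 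The induction hypothesis covers the earlier steps of $\tau'$. This works because $\tau'$ is itself the trace of an earlier reachable state, since traces grow only by appending.

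The main obstacle is the interaction between path rules and later extensions of the trace. A step that satisfied a require rule on $\tau'\cdot\sigma$ need not be re-checked on the full $\tau$. Conversely, a forbid pattern might match the full trace $\tau$ only through steps after $\sigma$, which is the responsibility of those later steps. I would handle this by stating explicitly that each step's path obligations are relative to the prefix at its commit time. I would then observe that the last step's check covers the whole current trace, so $\tau$ itself violates no forbid rule. A second, smaller subtlety is that \textsc{PEA}, \textsc{PED} and \textsc{PEP} are not syntactically exclusive. The inversion step needs only that $\mathit{allow}$ is derivable, and derivability of $\mathit{allow}$ already implies the \textsc{PEA} premises, so this overlap causes no difficulty.
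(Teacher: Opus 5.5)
Your proposal is correct and follows the paper's route: G1 comes from Policy Enforcement Soundness, itself an induction on trace length, by inverting \textsc{RTA} and then \textsc{PEA} for each committed step. Your version is somewhat more careful than the paper's proof. It also extracts the allow-clause and require-path premises of \textsc{PEA}, and it makes explicit that each step's path obligations are evaluated on the prefix ending at that step, whereas the paper leaves both points implicit.
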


\begin{proof}
Let $\sigma \in \tau$ be arbitrary. Since $st_r$ is reachable, $\tau$ is
a valid trace. By Policy Enforcement Soundness\ref{thm2} theorem, $\sigma$ was appended to
the trace only after policy evaluation returned $\mathit{allow}$. By rule \textsc{PEA}, policy evaluation returns $\mathit{allow}$ only if
the candidate step matches no deny-flow clause, satisfies the applicable
runtime requirements, and does not make the extended trace violate any
forbid path rule in $P$. Therefore, $\sigma$ satisfies all applicable flow
and path constraints in $P$. Since $\sigma$ was arbitrary, every step in $\tau$ satisfies all
applicable flow and path constraints.
\end{proof}

\begin{theorem}[G2: Delegation Safety]
Assume $P$ contains a delegation clearance rule that denies any delegation
step sending data with output label $\mathcal{L}_{out}$ to a delegated
agent $a'$ whenever
\[
\mathcal{L}_{out}.s \not\preceq_s a'.s_{\max}.
\]
Then no delegated agent receives data exceeding its authorized sensitivity
level in any reachable trace.
\end{theorem}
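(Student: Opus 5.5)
We argue by contradiction, reducing the claim to Policy Enforcement Soundness (Theorem~\ref{thm2}) and rule \textsc{PED}. First we make precise what it means for a delegated agent to \emph{receive} data in a reachable state $st_r=\langle M,\tau,Lin,ts,Req,Cap\rangle$. We take this to mean that $\tau$ contains a delegation step
\[
\sigma=\langle n^s,e,n^d,\mathcal{L}_{in},\mathcal{L}_{out},a,d,\overline{d_p}\rangle
\]
whose destination is the delegated agent $a'$. Under the transition semantics, this is the only way data can cross the delegation boundary, since only \textsc{RTA} extends $\tau$ and $M$. The data handed to $a'$ is exactly the object $d$. By \textsc{RTA}, its label is $M(d)=\mathcal{L}_{out}$ at the moment of commitment. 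The goal is therefore: for every such $\sigma\in\tau$, $\mathcal{L}_{out}.s\preceq_s a'.s_{\max}$.

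Suppose, for contradiction, that some delegation step $\sigma\in\tau$ has $\mathcal{L}_{out}.s\not\preceq_s a'.s_{\max}$. Because $st_r$ is reachable, Theorem~\ref{thm2} gives a prior state $st_r'$ and a request $req$ such that $st_r',req\Downarrow\sigma$ and $P,st_r'\vdash\sigma\Downarrow\mathit{allow}$. Rule \textsc{PEA} then requires, for every deny clause $(deny,fname):cl$ in $P$, that $cl\not\models\sigma$. By hypothesis, however, $P$ contains the delegation clearance deny clause $cl_{del}$. The predicates of $cl_{del}$ are: the edge is a delegation edge, the destination is a delegated agent $a'$, and $\mathcal{L}_{out}.s\not\preceq_s a'.s_{\max}$. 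All three hold of $\sigma$, so $cl_{del}\models\sigma$. This contradicts the \textsc{PEA} premise. Equivalently, \textsc{PED} applies and yields $\mathit{deny}$, and then \textsc{RTD} leaves the trace unchanged, so $\sigma$ could never have been appended.

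\textbf{Main obstacle.} The delicate step is justifying $cl_{del}\models\sigma$. Clause predicates are evaluated on the candidate step, which carries $\mathcal{L}_{out}$ as computed by \textsc{CS-Build} at evaluation time. We must check that this is the same label later stored in $M$, so that the clause and the \textsc{RTA} update speak about the same label. This follows because \textsc{RTA} stores the $\mathcal{L}_{out}$ of the very $\sigma$ that was evaluated.

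\textbf{Determinism of policy evaluation.} We also need that \textsc{PEA} and \textsc{PED} cannot both fire on the same step. This is immediate, since the deny premise of \textsc{PEA} is the negation of the \textsc{PED} premise.

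\textbf{Multi-hop delegation.} Finally, we should remark that nested delegations are covered. Each hop is its own delegation step in $\tau$, so the argument applies to every step individually, with no induction beyond what Theorem~\ref{thm2} already provides.
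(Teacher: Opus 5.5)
Your proof is correct and is essentially the paper's argument. You assume a committed delegation step with $\mathcal{L}_{out}.s \not\preceq_s a'.s_{\max}$, note that it matches the delegation clearance deny clause, and get a contradiction because every committed step was allowed by \textsc{PEA}, so it matches no deny clause; the paper cites G1 for that last fact, whereas you unfold Policy Enforcement Soundness and \textsc{PEA} directly and also spell out what ``receives'' means, which is the same reasoning one level down.
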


\begin{proof}
Assume, for contradiction, that some delegated agent $a'$ receives data
exceeding its authorized sensitivity level. Then there exists a step
\[
\sigma =
\langle n^s,e,n^d,\mathcal{L}_{in},\mathcal{L}_{out},
a,d,\overline{d_p}\rangle
\in \tau
\]
such that $e$ is a delegation edge to $a'$ and
\[
\mathcal{L}_{out}.s \not\preceq_s a'.s_{\max}.
\]

By the delegation clearance rule, this step matches a deny-flow clause in
$P$. However, by G1, every step in a reachable trace satisfies all
applicable flow constraints. In particular, no committed step can match a
deny-flow clause.

This is a contradiction. Therefore, no delegated agent receives data
exceeding its authorized sensitivity level.
\end{proof}

\begin{theorem}[G3: Trust Preservation]
Assume no explicit trust-reclassification transformation is applied. If
data originates from an untrusted source, then every runtime data object
derived from it remains untrusted in the taint map $M$.
\end{theorem}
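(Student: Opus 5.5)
We prove G3 by induction on the length of the reachable derivation, maintaining the invariant that every data identifier $d$ derived from an untrusted source has $M(d).t = \mathsf{false}$. Here ``derived from'' means $d$ lies in the reflexive–transitive closure of $Lin$ above some data object whose label was resolved from an untrusted source node or field. The initial state is the base case: either $M$ is empty, or its only untrusted-origin entries are the source objects themselves, which carry $t=\mathsf{false}$ by the source node's declared label.

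For the inductive step we case on the transition rule. Under \textsc{RTD} and \textsc{RTP} the state is unchanged, so the invariant holds trivially. Under \textsc{RTA} only $M[d\mapsto\mathcal{L}_{out}]$ and $Lin[d\mapsto\overline{d_p}]$ change, so it suffices to show that $\mathcal{L}_{out}.t=\mathsf{false}$ whenever the new object is untrusted-derived. We do this in two lemmas.

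The first lemma is about input resolution. Suppose the request's source node is untrusted, one of the selected fields is untrusted, or one of the parents $d_p$ is untrusted-derived. Then $\mathcal{L}_{in}.t=\mathsf{false}$. For \textsc{IL-Source} this is immediate. For \textsc{IL-Field} and \textsc{IL-Lineage} it follows because the join computes trust as $t_1\land t_2$, so a single untrusted component forces $\mathsf{false}$. For lineage parents we use the inductive hypothesis on $M(d_i)$.

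The second lemma is about propagation: $\rho\vdash\mathcal{L}_{in}\Downarrow\mathcal{L}_{out}$ never turns $t=\mathsf{false}$ into $\mathsf{true}$. We check each rule. \textsc{P-Inherit} is the identity. \textsc{P-Const} preserves $t$. \textsc{P-Opaque} outputs $\mathsf{false}$ outright. \textsc{P-Transform} is a composition of \textsc{R-RemoveCat}, \textsc{R-Downgrade} and \textsc{R-Set}, each of which leaves the trust component unchanged. An induction on the length of the rule sequence $r_1,\ldots,r_n$ then shows trust is preserved across the whole transform. The theorem's hypothesis excludes any explicit trust-reclassification transform, which is exactly the case these rules do not cover.

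I expect the main obstacle to be that the resolution rules are presented as alternatives. One could worry that \textsc{IL-Source} or \textsc{IL-Field} is applied while ignoring lineage, which would drop untrusted parents. I would resolve this the way the runtime does: the resolved input label is the join of every applicable rule's contribution, so lineage is always incorporated. Equivalently, I would read ``derived from'' as the dependencies the runtime records, namely the source node and fields of the request together with $Lin$. A second subtlety is that $M[d\mapsto\cdot]$ could overwrite an existing untrusted entry. Freshness of $d$, guaranteed by well-formedness via Runtime Preservation, rules this out.

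Combining the two lemmas, $\mathcal{L}_{out}.t=\mathsf{false}$ for every new untrusted-derived object, so the invariant holds in every reachable state, which is G3.
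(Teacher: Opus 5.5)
Your argument is correct, modulo semantic assumptions the paper also makes, and it uses the same three facts as the paper's proof: trust joins by conjunction, no propagation rule raises the trust bit, and \textsc{RTA} writes $\mathcal{L}_{out}$ into $M$. The difference is in how the induction is organized. The paper inducts directly on the lineage in $Lin$: the base case is an object read from an untrusted source, the step is that some parent has $M(d_i).t=\mathsf{false}$, and it simply asserts that propagation preserves untrustedness. You induct on the length of the transition sequence with a global invariant on $M$. You dispose of \textsc{RTD}/\textsc{RTP}, and you check \textsc{P-Inherit}, \textsc{P-Const}, \textsc{P-Opaque} and the three \textsc{R}-rules one by one. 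The paper's route is shorter and follows the ``derived from'' wording of the statement directly. Yours gets well-foundedness from the finite derivation rather than from acyclicity of $Lin$. It also exposes bookkeeping the lineage induction hides: the paper uses $M(d_i)$ as it stood when $d$ was produced but draws a conclusion about $M(d)$ in the final state. That is sound only if later transitions never rewrite existing entries.

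That requirement is where your justification overreaches, because Runtime Preservation does not give you freshness. The paper never actually defines $\Gamma \vdash st_r$ for runtime states, and its preservation proof says nothing about identifiers. Moreover, in \textsc{CS-Build} the step's $d$ is the request's \emph{input} identifier, so under the rules as written \textsc{RTA} can overwrite an existing entry of $M$.

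Similarly, folding the lineage contribution into every resolution is a reading of the semantics, not a consequence of it, since \textsc{IL-Source}, \textsc{IL-Field} and \textsc{IL-Lineage} are alternative rules. Relatedly, \textsc{IL-Field} joins only the field labels. An untrusted node with trusted field overrides, which \textsc{WFN} permits, therefore does not force $\mathcal{L}_{in}.t=\mathsf{false}$, contrary to the first disjunct of your input-resolution lemma.

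These are assumptions drawn from the prose (``each allowed step allocates a fresh output identifier whose parents are earlier identifiers''). The paper's proof depends on them just as silently. You should state them as explicit hypotheses rather than claim to derive them.
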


\begin{proof}
By induction on the lineage recorded in $Lin$.

\begin{itemize}
\item {Base case:}
If a runtime data object $d$ originates from an untrusted source, then
input-label resolution assigns it a label $\mathcal{L}$ such that
$\mathcal{L}.t=\mathsf{false}$. When the producing step is allowed, rule
\textsc{RTA} records this label in the taint map $M$.

\item {Inductive case:}
Suppose $d$ is produced from parent objects $\overline{d_p}$, and some
parent $d_i\in\overline{d_p}$ is untrusted:
\[
M(d_i).t=\mathsf{false}.
\]
By input-label resolution, the input label $\mathcal{L}_{in}$ is computed
from the labels of the lineage-connected parent objects. Since trust joins using conjunction,
$t_{in} = \bigwedge_i t_i,
$
and at least one parent satisfies
$
t_i=\mathsf{false},
$
it follows that
$
t_{in}=\mathsf{false}.
$
Hence
$
\mathcal{L}_{in}.t=\mathsf{false}.
$

Because no explicit trust-reclassification transformation is applied,
label propagation preserves untrustedness:
\[
\rho \vdash \mathcal{L}_{in}\Downarrow \mathcal{L}_{out}
\quad\Rightarrow\quad
\mathcal{L}_{out}.t=\mathsf{false}.
\]

Finally, rule \textsc{RTA} updates the taint map with
\[
M[d\mapsto \mathcal{L}_{out}].
\]
Thus $M(d).t=\mathsf{false}$.
\end{itemize}

Therefore, data originating from untrusted sources retains its trust
designation unless explicitly reclassified.
\end{proof}

\begin{theorem}[G4: External Disclosure Control]
Assume $P$ contains policies that deny sensitive data flows to external
sinks unless the flow is explicitly policy-approved. Then sensitive data
may reach an external sink only through a policy-approved flow.
\end{theorem}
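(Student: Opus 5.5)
The argument mirrors the proof of G2: a contradiction that rests on G1 and on Policy Enforcement Soundness (Theorem~\ref{thm2}). Suppose, for contradiction, that some reachable state $st_r=\langle M,\tau,Lin,ts,Req,Cap\rangle$ has sensitive data reaching an external sink through a flow that is not policy-approved. First I will make ``reaching an external sink'' precise in the runtime model. The only way data moves to a node is through a committed step, and only \textsc{RTA} appends steps. So there must be a step $\sigma=\langle n^s,e,n^d,\mathcal{L}_{in},\mathcal{L}_{out},a,d,\overline{d_p}\rangle\in\tau$ such that $n^d$ carries $\mathit{external}=\mathit{true}$ and $\mathcal{L}_{out}$ (or $\mathcal{L}_{in}$, depending on which label the deny clause inspects) is sensitive in the sense fixed by the assumed policy. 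Denied and paused requests leave the state unchanged under \textsc{RTD} and \textsc{RTP}, so they contribute no such step. This reduction is what justifies restricting attention to committed steps.

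Next I split on whether $\sigma$ is policy-approved. Here I read ``policy-approved'' as: some allow or release clause matches $\sigma$, no deny clause matches it, and $\tau$ violates no forbid path rule at the prefix where $\sigma$ was appended. By Theorem~\ref{thm2}, $\sigma$ was appended only after $P,st_r'\vdash\sigma\Downarrow\mathit{allow}$ held in the predecessor state $st_r'$. By the premises of \textsc{PEA}, that derivation gives three facts: an allow clause $cl$ with $cl\models\sigma$ exists, no deny clause matches $\sigma$, and $\tau'\cdot\sigma$ satisfies every require path rule and no forbid path rule. Now suppose $\sigma$ were not authorized in the sense of the hypothesis. Then the assumed deny rule for sensitive data to external sinks would match $\sigma$, which contradicts the second fact (equivalently, G1). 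Otherwise the first and third facts say exactly that $\sigma$ is policy-approved. Either way, every sensitive-to-external step is authorized.

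The main obstacle is the gap between the informal hypothesis and the formal rules. The statement speaks of ``release rules'' and ``policy-approved'' flows, but \textsc{PEA} only mentions allow clauses and path rules. I would handle this by encoding the hypothesis as a single deny clause whose guard is ``sensitive label, external destination, and no matching allow/release clause.'' Then matching that deny clause is literally equivalent to being unauthorized, and the contradiction with \textsc{PEA} is immediate. A second subtlety is multi-hop leakage, where sensitive data reaches the sink through intermediate objects. I would address it by noting that the final hop into the sink is itself a committed step. Its label is computed by \textsc{IL-Lineage} as a join over its parents, and a join can only keep or raise sensitivity and categories. So sensitivity cannot silently vanish before the final step, except through a \textsc{P-Transform} downgrade or release. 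Such a transformation is itself an edge that the policy must have allowed, and I would count it as part of the approved flow.
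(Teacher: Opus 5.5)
Your proof is correct and follows the paper's argument: you isolate a committed step $\sigma\in\tau$ with $n^d$ an external sink and a sensitive label, then derive a contradiction because only \textsc{PEA} yields $\mathit{allow}$ (you cite Theorem~\ref{thm2}/G1 where the paper appeals to \textsc{RTA} directly, which amounts to the same fact). Your multi-hop aside is not needed for the per-step reading the paper uses, and as stated it is slightly off: \textsc{P-Const} can also lower sensitivity, and \textsc{P-Opaque} resets the category set to $\emptyset$, so a category-based notion of sensitivity can disappear without any \textsc{P-Transform} step.
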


\begin{proof}
Assume, for contradiction, that sensitive data reaches an external sink
through a flow that is not policy-approved. Then there exists a step
\[
\sigma =
\langle n^s,e,n^d,\mathcal{L}_{in},\mathcal{L}_{out},
a,d,\overline{d_p}\rangle
\in \tau
\]
such that $n^d$ is an external sink, $\mathcal{L}_{out}$ is sensitive,
and the flow is not approved by $P$.

Since the flow is not policy-approved, either a deny-flow clause matches
$\sigma$, or an applicable path requirement for external release is not
satisfied. In either case, policy evaluation could not return
$\mathit{allow}$ by rule \textsc{PEA}.

However, since $\sigma \in \tau$, the step was appended to the trace by
rule \textsc{RTA}, which requires
\[
P,st_r' \vdash \sigma \Downarrow \mathit{allow}
\]
for some previous runtime state $st_r'$.

This is a contradiction. Therefore, sensitive data may reach external
sinks only through policy-approved flows.
\end{proof}

\begin{theorem}[G5: Path Safety]
If $P$ contains a forbid path rule with path expression $\pi_e$, then no
reachable trace $\tau$ satisfies $\pi_e$:
$\tau \not\models \pi_e$.
\end{theorem}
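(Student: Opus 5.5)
The plan is to argue by induction on the number of runtime transitions leading to a reachable state, proving the invariant that the trace $\tau$ of every reachable state satisfies $\tau \not\models \pi_e$ for each forbid path rule $(\mathit{forbid},pname):\langle \pi_e,\overline{e},\pi_\phi,scope\rangle \in \overline{\pi}$. This mirrors the proof of Theorem~\ref{thm2}. The difference is that the property concerns the whole trace rather than individual steps, so the induction must carry the trace-level statement itself.

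\textbf{Base case.} The initial state $st_r^0$ has the empty trace $\epsilon$. I would check that no path expression matches $\epsilon$, by inspecting the rules of Section~\ref{subsec:path-matching}. \textsc{PM-Edge} requires a singleton trace. \textsc{PM-Seq} and \textsc{PM-Strict} need two matching subtraces, and each is nonempty by induction on the matching derivation. \textsc{PM-Choice} and \textsc{PM-Group} inherit nonemptiness from their premises. \textsc{PM-Repeat} is the delicate case: with $m=0$ and $k=0$ the empty concatenation would match. I would therefore either assume, as part of $\Gamma \vdash P$, that forbid path expressions have $m\geq 1$, or note that a forbid rule matching $\epsilon$ would make every request deny, so reachable traces stay empty and the rule can be excluded as vacuous.

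\textbf{Inductive step.} I would case on the last transition. Under \textsc{RTD} and \textsc{RTP} the trace is unchanged, and the induction hypothesis applies directly. Under \textsc{RTA} the new trace is $\tau\cdot\sigma$ with $P,st_r \vdash \sigma \Downarrow \mathit{allow}$. The only rule yielding $\mathit{allow}$ is \textsc{PEA}, and its fourth premise gives exactly $\tau\cdot\sigma \not\models \pi_e$ for every forbid rule. Equivalently, I can invoke Theorem~\ref{thm2} on the last appended step, which already records that it "does not make the trace violate any forbid path rule."

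\textbf{Main obstacle.} The hard part is reconciling "$\tau \models \pi_e$" with the intended reading of $\pi_e$ matching a \emph{fragment} of the trace, possibly restricted by $scope$ (a lineage subgraph or the session). Because \textsc{PM-Seq} matches whole concatenations, a forbidden pattern could appear as a sub-fragment of a longer trace and not be caught by whole-trace matching. I would handle this by reading both \textsc{PEA} and the theorem with the same matching relation, interpreted relative to $scope$. With a consistent reading, the invariant transfers verbatim and no monotonicity property of matching is needed: each commit rechecks the full extended trace, so the conclusion follows from the last \textsc{PEA} check alone, without reasoning about earlier prefixes.
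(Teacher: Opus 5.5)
Your argument is the paper's in substance. The paper assumes a reachable $\tau \models \pi_e$, writes $\tau = \tau'\cdot\sigma$, and contradicts the forbid premise of \textsc{PEA} at the \textsc{RTA} transition that appended $\sigma$. That is exactly the fact your \textsc{RTA} case uses, and your \textsc{RTD}/\textsc{RTP} cases make explicit the unstated point that later deny/pause transitions leave the trace unchanged. Taking the last step of $\tau$, as you do, also avoids the paper's phrase ``the last step whose addition makes the trace satisfy $\pi_e$,'' which wrongly suggests some monotonicity of matching is needed.

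Your base case is a genuine improvement. The paper's decomposition $\tau=\tau'\cdot\sigma$ silently assumes $\tau\neq\epsilon$. You are right that \textsc{PM-Repeat} with $k=0$ gives $\epsilon\models\pi_e^{\{0,n\}}$, and the paper's well-formedness condition $0\le m\le n$ permits this. So the statement really needs an added hypothesis, such as $m\geq 1$ in forbid expressions or simply $\epsilon\not\models\pi_e$.

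Keep that first fix and drop your second one. Under the paper's whole-trace matching, a forbid expression matching $\epsilon$ does not force every request to be denied. Take $\pi_e = e_m^{\{0,1\}}$ and a step $\sigma$ not matched by $e_m$: then $[\sigma]\not\models\pi_e$, so \textsc{PEA} can still allow $\sigma$. Even when every request is denied, the initial empty trace is itself reachable and satisfies $\pi_e$. Such a rule is therefore a counterexample to the theorem, not a vacuous case that can be set aside.
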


\begin{proof}
Assume, for contradiction, that there exists a reachable trace $\tau$
such that $\tau \models \pi_e.$
Then the forbidden multi-step behavior described by $\pi_e$ appears in
the trace.

Let $\sigma$ be the last step whose addition makes the trace satisfy
$\pi_e$, and write $\tau = \tau' \cdot \sigma.$
Since $\sigma \in \tau$, the step was appended by rule \textsc{RTA}.
Thus, for some previous runtime state $st_r'$,
$
P,st_r' \vdash \sigma \Downarrow \mathit{allow}.
$

However, by rule \textsc{PEA}, policy evaluation can return
$\mathit{allow}$ only if the extended trace does not match any forbid
path expression in $P$. Since $\tau' \cdot \sigma \models \pi_e$, the
step could not have been allowed.

This is a contradiction. Therefore, no reachable trace satisfies
$\pi_e$.
\end{proof}

\section{Appendix}
\label{sec:appendix}
\begin{definition}[Category Well-Formedness]
\label{def:catwf}
A category $c$ is well-formed, written $\vdash c$ if it is generated by the category grammar defined in Figure~\ref{fig:syntax},
where each atomic identifier $a$ belongs to the set of atomic categories
$\mathcal{A}$.

\[
\frac{a \in \mathcal{A}}
     {\vdash a}
\; \textsc{CAtom}
\qquad
\frac{\vdash c \qquad a \in \mathcal{A}}
{\vdash c.a}
\; \textsc{CPath}
\]

\end{definition}

\begin{definition}[Label Well-Formedness]
\label{def:labelwf}

A label $\mathcal{L} = \langle s, C, t \rangle$
is well-formed under schema $\Gamma$, written $
\Gamma \vdash \mathcal{L}\;\mathsf{label}$,
if the sensitivity and trust component are valid,
and every category in the category set is well-formed.

\[
\frac{
\begin{array}{c}
s \in sensitivity
\\[0.4em]
\forall c \in C.\; \vdash c\;
\qquad t \in \{true, false\}
\end{array}
}
{
\Gamma \vdash \langle s, C, t \rangle;
}\textsc{WFL}
\]

\end{definition}

\begin{definition}[Node Well-Formedness]
\label{def:nodewf}

A node
\[
n = name : \langle k, \mathcal{L}, \overline{(x,v)}, \overline{(f,\mathcal{L}_f)} \rangle
\]
is well-formed under schema $\Gamma$, written $\Gamma \vdash n$,
if the node has a valid kind, its node label is well-formed,
its attributes are well-formed, and its field labels are well-formed.

\[
\frac{
\begin{array}{c}
k \in kind
\qquad
\Gamma \vdash \mathcal{L}
\qquad
\exists T,i.\; name = T :: i
\\[0.5em]
\forall (x,v) \in \overline{(x,v)}.\;
x \in \mathrm{dom}(\Sigma)
\\
\forall (x,v) \in \overline{(x,v)}.\;
\Gamma \vdash v : \Sigma(x)
\\[0.5em]

\forall (f,\mathcal{L}_f) \in \overline{(f,\mathcal{L}_f)}.\;
f \in \mathrm{dom}(\Sigma_f)
\\
\forall (f,\mathcal{L}_f) \in \overline{(f,\mathcal{L}_f)}.\;
\Gamma \vdash \mathcal{L}_f
\\[0.5em]

\mathsf{unique}(\overline{x})
\qquad
\mathsf{unique}(\overline{f})
\end{array}
}
{
\Gamma \vdash
name : \langle
k,
\mathcal{L},
\overline{(x,v)},
\overline{(f,\mathcal{L}_f)}
\rangle
}
\; \textsc{WFN}
\]

\end{definition}

\begin{definition}[Edge Well-Formedness]
\label{def:edgewf}

An edge
\[
e = ename : \langle n_1, n_2, \rho \rangle
\]
is well-formed under schema $\Gamma$, written $\Gamma \vdash e$,
if the edge name is valid, both endpoint nodes are declared in the schema,
and the propagation specification is well-formed.

\[
\frac{
\begin{array}{c}
n_1 =
name_1 : \langle
k_1,
\mathcal{L}_1,
\overline{(x_1,v_1)},
\overline{(f_1,\mathcal{L}_{f_1})}
\rangle
\in \mathcal{N}
\\[0.5em]
n_2 =
name_2 : \langle
k_2,
\mathcal{L}_2,
\overline{(x_2,v_2)},
\overline{(f_2,\mathcal{L}_{f_2})}
\rangle
\in \mathcal{N}
\\[0.5em]
(k_1,k_2)
\in
\left\{
\begin{gathered}
(\mathsf{source},\mathsf{internal}),
(\mathsf{source},\mathsf{sink}),\\
(\mathsf{internal},\mathsf{internal}),
(\mathsf{internal},\mathsf{sink})
\end{gathered}
\right\}
\\[0.5em]
\rho \in \mathit{prop}
\qquad
\rho = \transform(\overline{r})
\Rightarrow
\forall r \in \overline{r}.\; r \in \mathit{rule}
\\[0.5em]
\rho = \constantp{s}
\Rightarrow
s \in \mathit{sensitivity}
\qquad \Gamma \vdash n_1 \qquad \Gamma \vdash n_2
\end{array}
}
{
\Gamma \vdash
ename : \langle n_1, n_2, \rho \rangle
}
\; \textsc{WFE}
\]
\end{definition}

\begin{definition}[Agent Well-Formedness]
\label{def:agentwf}

An agent
\[
a = (aname, name) :
\langle s, \overline{e}, \overline{a}, b, \overline{a^p} \rangle
\]
is well-formed under schema $\Gamma$, written $\Gamma \vdash a$,
if the agent name is valid, the referenced node exists, the sensitivity bound
is valid, referenced edges are declared and respect the sensitivity bound, and
delegated agents are declared.

\[
\frac{
\begin{gathered}
\exists T',i'.\; name = T' :: i'
\qquad
s \in \mathit{sensitivity}
\\[0.4em]
b \in \{\mathsf{true},\mathsf{false}\}
\\[0.4em]
\forall e' \in \overline{e}.\;
e' = ename : \langle n_1,n_2,\rho\rangle \in \mathcal{E}
\\[0.4em]
\forall e' \in \overline{e}.\;
e' = ename : \langle n_1,n_2,\rho\rangle \in \mathcal{E}
\\[0.4em]
\mathsf{sens}(n_1) \leq s
\qquad
\mathsf{sens}(n_2) \leq s
\qquad \forall e \in \overline{e}, \Gamma \vdash e
\\[0.4em]
\forall a' \in \overline{a}.\; a' \in \mathcal{A}
\qquad
\forall a^p \in \overline{a^p}.\; a^p \in \mathcal{A}
\end{gathered}
}
{
\Gamma \vdash
(aname, name) :
\langle s, \overline{e}, \overline{a}, b, \overline{a^p} \rangle
}
\; \textsc{WFA}
\]

\end{definition}

\begin{definition}[Task-Scope Well-Formedness]
\label{def:taskscopewf}

A task scope
\[
ts = (tname, name) : \langle \overline{e}_{in}, \overline{e}_{out} \rangle
\]
is well-formed under schema $\Gamma$, written $\Gamma \vdash ts$,
if the task name is valid, the referenced node is declared, and all input and
output edges are declared in the schema.

\[
\frac{
\begin{array}{c}
\exists T,i.\; tname = T :: i
\qquad
\exists T',i'.\; name = T' :: i'
\\[0.4em]
name \in \mathcal{N}
\\[0.4em]
\forall e \in \overline{e}_{in}.\; e \in \mathcal{E}
\qquad
\forall e \in \overline{e}_{out}.\; e \in \mathcal{E}
\end{array}
}
{
\Gamma \vdash
(tname, name) : \langle \overline{e}_{in}, \overline{e}_{out} \rangle
}
\; \textsc{WFTS}
\]

\end{definition}

\begin{definition}[Clause Well-Formedness]
\label{def:clausewf}

A clause
\[
cl =
\langle n^{s}_\phi, n^{d}_\phi, e_\phi,
\mathcal{L}_\phi, a_\phi, \phi, \phi_{req}\rangle
\]
is well-formed under schema $\Gamma$, written $\Gamma \vdash cl$,
if all predicate components are well-typed over their corresponding
runtime objects.

\end{definition}

\begin{definition}[Flow Well-Formedness]
\label{def:flowwf}

A flow declaration
\[
f = (f_\vdash, fname) : cl
\]
is well-formed under schema $\Gamma$, written $\Gamma \vdash f$,
if the flow permission is valid, the flow name is valid, and the clause is
well-formed.

\[
\frac{
\begin{array}{c}
f_\vdash \in \mathit{fperm}
\quad \exists T,i.\; fname = T :: i
\quad \Gamma \vdash cl
\end{array}
}
{
\Gamma \vdash (f_\vdash, fname) : cl
}
\; \textsc{WFF}
\]

\end{definition}

\begin{definition}[Edge-Match Well-Formedness]
\label{def:edgematchwf}

An edge match
\[
e_m =
ename :
\langle n^s_\phi, n^d_\phi, \mathcal{L}_\phi \rangle
\]
is well-formed under schema $\Gamma$, written $\Gamma \vdash e_m$,
if the edge name has the form $T :: i$ and all predicate components are well-typed over their corresponding
runtime objects.

\end{definition}

\begin{definition}[Path-Expression Well-Formedness]
\label{def:pathexprwf}

A path expression $\pi_e$ is well-formed under schema $\Gamma$, written
$\Gamma \vdash \pi_e$, if every edge match appearing in $\pi_e$ is
well-formed, every composed subexpression is well-formed, and every bounded
repetition $\pi_e^{\{m,n\}}$ satisfies $0 \leq m \leq n$.

\end{definition}

\begin{definition}[Path Well-Formedness]
\label{def:pathwf}

A path
\[
\pi = (\pi_\vdash, pname) :
\langle \pi_e, \overline{e}, \pi_\phi, scope \rangle
\]
is well-formed under schema $\Gamma$, written $\Gamma \vdash \pi$,
if the path permission is valid, the path name has the form $T :: i$,
the path expression is well-formed, all referenced edges are declared,
the path predicate is typed over traces, and the scope is valid.

\end{definition}

\begin{definition}[Policy Well-Formedness]
\label{def:policywf}

A policy
\[
P = \langle \overline{f}, \overline{\pi} \rangle
\]
is well-formed under schema $\Gamma$, written $\Gamma \vdash P$,
if every flow and path declaration appearing in the policy is
well-formed.

\end{definition}

\end{document}